\definecolor{Red}{rgb}{1,0,0}
\definecolor{Blue}{rgb}{0,0,1}
\definecolor{Olive}{rgb}{0.41,0.55,0.13}
\definecolor{Green}{rgb}{0,1,0}
\definecolor{MGreen}{rgb}{0,0.8,0}
\definecolor{DGreen}{rgb}{0,0.55,0}
\definecolor{Yellow}{rgb}{1,1,0}
\definecolor{Cyan}{rgb}{0,1,1}
\definecolor{Magenta}{rgb}{1,0,1}
\definecolor{Orange}{rgb}{1,.5,0}
\definecolor{Violet}{rgb}{.5,0,.5}
\definecolor{Purple}{rgb}{.75,0,.25}
\definecolor{Brown}{rgb}{.65,.4,.25}
\definecolor{Grey}{rgb}{.5,.5,.5}
\newtheorem{theorem}{Theorem}[section]
\newtheorem{cor}[theorem]{Corollary}
\newtheorem{fact}[theorem]{Fact}
\theoremstyle{definition}
\newtheorem{remark}{Remark}[section]
\newcounter{tenumerate}
\def\P{\mathbb{P}}
\newcommand{\one}{\1}
\renewcommand{\epsilon}{\varepsilon}
\newcommand{\1}{\mathbf{1}}
\newcommand{\N}{{\mathbb N}}
\newcommand{\E}{{\mathbb E}}
\newcommand{\remove}[1]{}
\renewcommand{\leq}{\leqslant}
\renewcommand{\geq}{\geqslant}
\def\XXint#1#2#3{{\setbox0=\hbox{$#1{#2#3}{\int}$}
\vcenter{\hbox{$#2#3$}}\kern-.5\wd0}}
\title{The Hitchhiker's Guide to Affiliation Networks:\\A Game-Theoretic Approach}
\author{
Christian Borgs\thanks{Microsoft Research New England. Email: [borgs,jchayes]@microsoft.com}
\and
Jennifer Chayes\footnotemark[1]
\and
Jian Ding\thanks{Department of Statistics, UC Berkeley. Email: jding@stat.berkeley.edu} \footnotemark[4]
\and
Brendan Lucier\thanks{Department of Computer Science, University of Toronto. Email: blucier@cs.toronto.edu} \thanks{This work was done while JD and BL were interns at Microsoft Research New England.}
}
\begin{document}
\setcounter{page}{0}

\maketitle

\abstract{
We propose a new class of game-theoretic models for network formation in which strategies are not directly related to edge choices, but instead correspond more generally to the exertion of social effort.  This differs from existing models in both formulation and results: the observed social network is a byproduct of a more expressive strategic interaction, which can more naturally explain the emergence of complex social structures.  Within this framework, we present a natural network formation game in which agent utilities are locally defined and that, despite its simplicity, nevertheless produces a rich class of equilibria that exhibit structural properties commonly observed in social networks -- such as triadic closure -- that have proved elusive in most existing models.

Specifically, we consider a game in which players organize networking events at a cost that grows with the number of attendees.  An event's cost is assumed by the organizer but the benefit accrues equally to all attendees: a link is formed between any two players who see each other at more than a certain number $r_0$ of events per time period, whether at events organized by themselves or by third parties.  The graph of connections so obtained is the social network of the model.

We analyze the Nash equilibria of this game for the case in which
each player derives a benefit $a>0$ from all her neighbors in the
social network and when the costs are linear,  i.e., when the cost
of an event with $\ell$ invitees is $b+c\ell$, with $b>0$ and $c>0$.
For $\gamma=a/cr_0>1$ and $b$ sufficiently small, all Nash
equilibria have the complete graph as their social network;
for $\gamma<1$ the Nash equilibria correspond to a rich class of
social networks, all of which have substantial clustering in the
sense that the clustering coefficient is bounded below by the
inverse of the average degree.  Many observed social network
structures occur as Nash equilibria of this model.  In particular,
for any degree sequence with finite mean, and not too many vertices
of degree one or two, we can construct a Nash equilibrium producing
a social network with the given degree sequence.


We also briefly discuss generalizations of this model to more complex utility functions
and processes by which the resulting social network is formed.
}

\newpage

\section{Introduction}

In the past decade there has been increasing interest in complex social networks that arise in both
online and offline contexts.  
Empirical studies have found these networks to share many properties, most notably small diameter, heavy-tailed degree distributions, and substantial clustering. 
In light of these observations, theoretical work has focused on explaining \emph{how} and \emph{why} such features appear.
While many of these properties have been modeled probabilistically, it is less well understood why they arise as outcomes of strategic behavior.
%
To this end, we present a new class of game-theoretic models that generate networks with many commonly-observed structural properties which have proved elusive in existing strategic models.

Probabilistic approaches to modeling network formation include static random graph models
(such as the classic random graph model \cite{Bollobas01} and the configuration model \cite{MR93}) and dynamic random graph models (such as the preferential attachment model \cite{BA99} and its many variants).
Many observed properties of social networks have been captured in such models: small diameter is easy to achieve (though often not easy to prove) by the inclusion of randomness \cite{BR04}; heavy-tailed degree distributions arise from many preferential attachment models \cite{BRST01};
and clustering has been established in several random dynamic models such as the copying \cite{KRRSTU00} and affiliation \cite{LS09} models.  On the other hand, these models do not take into account
individual incentives in the development of connections, and therefore do not provide an understanding of how networks
arise from individual choices.  For this, a game-theoretic approach is better suited.

Game-theoretic approaches to network formation go back to the work
of Boorman \cite{Boorman75}, Aumann and Myerson \cite{AM88}, and
Myerson \cite{Myerson91}. For a beautiful treatment,
see the recent books by Goyal \cite{Goyal07}, Jackson \cite{Jackson08}, and 
Easley and Kleinberg \cite{EK10}.
Generally speaking, the strategic models studied
in the literature share a common high-level description: agents are assumed to
derive some benefit from a social network, but connections come
at a cost (be it in the form of effort, money, or otherwise). Agents
thus strategically choose which connections to maintain, either
unilaterally or in pairs, and then reap the benefits of the
resulting network.  One would then predict that observed social
networks correspond to the equilibria of the resulting game.
Although there has been some very thoughtful and insightful work on
strategic models of network formation (see \cite{Goyal07},
\cite{Jackson08}, and \cite{EK10}, and references therein), the
social networks so obtained at equilibrium tend to have rather
limited architectures.  In particular, they tend not to exhibit many
structural properties associated with observed social networks, such
as triadic closure.\footnote{Triadic closure refers to the tendency for
a node's neighbours to connect to each other.}

In this paper, we present a new class of game-theoretic models for network formation, which to our knowledge are
qualitatively different, in both formulation and results, from previous models.  Our
approach is motivated by the sociological notion of \emph{affiliation networks} due to Breiger \cite{Breiger74, Breiger90}.
An affiliation network is a bipartite graph
  where the nodes of one type are the players, and the nodes of the other type are events or groups with which
   the players can be affiliated.  One can also view an affiliation network as a hypergraph over players, with each
    hyperedge representing a single group.  This hypergraph can then be used to
induce a network on the set of players by linking any two that are members of at least one common group.

We build upon the affiliation network model in two ways.  First, we
take a game-theoretic approach whereby each affiliation group is
sponsored by an agent (at a cost), while the benefits from the
resulting network are derived by all players (allowing
 some individuals to hitchhike on the efforts of others).  Second, we change the induced network so that coaffiliation does not
  automatically guarantee a connection: agents must jointly participate in many events in order to form a link.
More concretely, we study a model in which agents hold social
events, where an event with $\ell$ invitees has cost $b+c\ell$ for
$b,c > 0$.   Agents who see each other at more than $r_0$ events
become connected, for some
$r_0\in \N$, and these connections form the observed social network.
Each player then receives a benefit $a > 0$ for each of
 their neighbors in the network.


Formulated more abstractly, our paradigm is that the actions of the players in a network formation game
are not necessarily directly associated with the formation of edges, but rather take place in an underlying strategy space.
 The resulting network
  is determined by an interplay of these actions which is not necessarily just a
  union of edges chosen by the players.
%
%
We will assume that the cost $\mathcal C_v$ for player $v$ is a
function of her strategy $\mathcal P_v$ and the strategic choices
of all agents imply a network $G = G(\{\mathcal P_v\}_{v \in V})$.
The benefit $\mathcal B_v$ of agent $v$ is then a function only of this
social network (as opposed to the full strategy space), leading to a utility
\[
\mathcal U_v=\mathcal B_v(G)-\mathcal C_v(\mathcal P_v)
  \]
for player $v$.  Some classes of examples for benefit functions $\mathcal B_v$ include distance-based benefits, such as those introduced by Bloch and Jackson \cite{BJ07}, as well as  benefits that do not decay with distance, such as those as studied by Bala and Goyal \cite{BG00}.


\paragraph{Results}

Returning to our particular example of a social event game, we study
the equilibria of behaviour as a function of parameter $\gamma =
a/cr_0$.
  When $\gamma > 1$ and $b$ is sufficiently small, all equilibria result in the complete graph (Fact \ref{fact.complete}).  On the other hand,
  for $\gamma < 1$, there is a much richer class of equilibria that all demonstrate a high degree of triadic closure.  Specifically, the induced
  subgraph of the ball of radius $1$ centered at any vertex $v$ can be written as a union of triangles and at most one additional edge
  (Theorem \ref{thm.weak.ties}).  From this we can derive tight lower bounds on the clustering coefficient of any resulting social network
  in terms of the average degree; see Theorem \ref{thm.cluster}.  It is notable that clustering arises endogenously in sparse
  graphs, even though an agent's benefit from the network depends only on her degree.

\begin{figure}
\begin{center}
\begin{tabular}{c}
\includegraphics[width=2in]{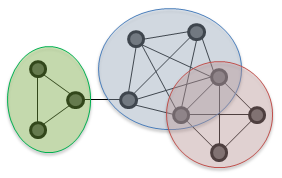} \\
\end{tabular}
\end{center}
\caption{A typical connected component of a network at equilibrium, where cliques (circled for emphasis) are connected arbitrarily by local bridges and/or small overlapping subgroups.}
\label{fig:sample}
\end{figure}

We then consider the larger structural properties of networks
supportable at equilibrium.  We find that such networks are built up
from cliques whose minimal size depends on parameter $\gamma$;
loosely speaking, we find that larger cliques are supportable for
wider ranges of parameters (see Fact \ref{fact.supportable.clique}).
Such cliques can be thought of as tightly-knit groups, which can then
be loosely connected into arbitrary
social structures.  More precisely, given a graph $H$ on $n$
vertices with degrees $d_1, \dotsc, d_n$, and complete graphs $H_1,
\dotsc, H_n$ with $|H_i| \geq \min\{d_i,3\}$, we can construct a
supportable social network in which the vertices of $H$ are replaced
by the cliques $\{H_i\}$ and the edges of $H$ represent local
bridges\footnote{A local bridge is an edge whose endpoints do not
share a common neighbour.} between those communities.  Such a
structure can also be supported by overlapping communities, as long
as the non-overlapping regions are sufficiently large.  See Figure
\ref{fig:sample}, Fact \ref{fact.bowtie}, and Remark \ref{rem.community}.
Moreover, such
structures allow us to support networks with any given heavy-tailed
degree distribution, as in Theorem \ref{thm-degree-sequence}.

\paragraph{Related Work}

The notion of affiliation networks was introduced by Breiger \cite{Breiger74, Breiger90}
and expanded upon by McPherson \cite{McPherson82};
see also Wasserman and Faust \cite{WF89} and references therein.  A theoretical treatment of
this model was given by Lattanzi and Sivakumar \cite{LS09}, who showed that a variant of the
copy model of random network growth applied to affiliation networks leads to graphs that exhibit
heavy-tailed degree distributions, low diameter, and edge densification.  This work differs from
our results in that it does not consider strategic issues.

Game-theoretic models of network formation go back to Boorman \cite{Boorman75} and
Myerson \cite{Myerson91}.
Very roughly speaking, most strategic models studied in the literature can be divided into two classes: those in which links
 are formed by  unilateral decisions, and those in which decisions are made pairwise.  In the case of unilateral
 decisions, the appropriate notion of stability is Nash equilibrium.  Bala and Goyal \cite{BG00} present models of
 unilateral link formation where a player's benefit is the size of the reachable component, either via directed
or undirected edges.  In the former case,
  the efficient equilibria architectures are cycles or empty networks; in the latter, the resulting equilibria are trees.
Fabrikant et.\ al.\ \cite{FKP02} consider a more complex utility model motivated by routing costs, and obtain
equilibria as trees with heavy-tailed degree distributions.

In the case that mutual consent is required to form a link, the
natural notion of stability is that of pairwise stability: no
individual player wants to sever a link and no pair of players wants
to add a link between them.  Many such models
 of network formation have been studied in recent years.  The distance-based utility model of Bloch and Jackson \cite{BJ07},
 as well as variants of the model due to Fabrikant et. al. \cite{FLMPS03}, have either the complete graph or stars as the unique
 resulting efficient networks. The coauthor model of Jackson and Wolinsky \cite{JW95}  has disconnected cliques as its efficient
 resulting network, while the so-called island connection model due to Jackson and Rogers \cite{JR05} generates cliques connected
 by single edges.


\section{A Model of Social Effort}

Let $V$ be a community of rational agents.  We wish to describe a
network formation game through which the agents form connections. We
begin by describing the strategy space. An \emph{event} is a subset
$P \subset V$ of agents, along with a corresponding \emph{rate} $r >
0$. A strategy for an agent $v\in V$ is a sequence of events
$\mathcal{P}_v = \{P_{v, 1}, \ldots, P_{v, k_v}\}$ with
corresponding rates $r_{v,1}, \dots, r_{v, k_v}$.  Here $k_v$ is the
number of events initiated by $v$.  We will refer to a strategy
profile $\mathcal{P} = \{(\mathcal{P}_v, r_v)\ : \ v \in V\}$ as an
\emph{event configuration}.
Fixing an event configuration and individuals $u,v \in V$, the
\emph{meeting rate supported by $w \in V$} between
$u$ and $v$ is
$$M_{u,v}^w = \mbox{$\sum_{i \in [k_w]}$}\,r_{w, i} \one_{\{u, v\in P_{w, i}\}}\,.$$
Thus $M_{u,v}^w$ denotes the rate at which $u$ and $v$ are both present in
an event held by $w$.
The \emph{meeting rate} $M_{u,v}$ is the total rate of the events that both
$u$ and $v$ attend.
That is,
\begin{equation}\label{eq-meet-rate}
M_{u,v} =\mbox{$\sum_{w \in V} $}\,M_{u,v}^w\,.
\end{equation}
We say that agents $u$ and $v$ are \emph{connected} if their meeting
rate is at least some threshold $r_0$.  Without loss of generality
we scale values so that $r_0 = 1$.  We write $N_v$ for the set of
individuals connected to $v$.  That is,
\begin{equation}\label{eq-friends}
N_v = \{u\neq v: M_{u,v} \geq 1\}\,.
\end{equation}
This notion of connectedness is symmetric, so that $u \in N_v$
if and only if $v \in N_u$.

\paragraph{Utility Model} We now describe the utilities in our network
formation game. We assume that an agent obtains a benefit $a>0$ for
each agent to which he is connected. Also, the cost for agent $v$ to
hold events $P_{v,1}, \dotsc, P_{v,k_v}$ at rates $r_{v,1}, \dotsc,
r_{v,k_v}$ is
\begin{equation}\label{eq-party-cost}
\mathcal{C}_v(\mathcal{P}_v) = \mbox{$\sum_{i\in [k_v]}$}\,r_{v,i}(
c(|P_{v,i}\backslash\{v\}| + b)
\end{equation}
where $c>0$ counts the cost of an event per agent (excluding $v$
herself) and $b> 0$ counts the initial fixed cost of an event.
Altogether, given event configuration $\mathcal{P}$, the utility of
agent $v$ is
\begin{equation}\label{eq-utility}
\mathcal{U}_v = a|N_v| - \mathcal{C}_v(\mathcal{P}_v) = a|N_v| -
\mbox{$\sum_{i\in [k_v]}$} \,r_{v, i} (c|P_{v, i}\backslash\{v\}| + b)\,.
\end{equation}
We will write $\gamma = \frac{a}{c}$ throughout.

\paragraph{Stability} We say an event configuration is
\emph{stable} if it forms a Nash equilibrium in the network formation
game.
An event configuration naturally defines a network, where we place
an edge between $u$ and $v$ precisely when $u \in N_v$.
We say that this network of connections is \emph{supported} by
event configuration $\mathcal{P}$.
We say that a graph is
\emph{supportable} if there exists at least one stable event
configuration that supports it.


\section{Characterization of Stable Networks}

We now wish to characterize networks that arise as Nash equilibria
of our network formation game.  We begin by considering the optimal
response of an agent $v$ given the strategy profile of the other
agents.
We say that the event strategy $\mathcal{P}_v$ \emph{realizes}
invitation rates $\mathcal{M}_v = \{m_{v, u}^v: u\neq v\}$ if $M_{v,
u}^v = m_{v, u}^v$ for all  $u\neq v$. A key ingredient in analyzing
optimal responses for $v$ is to show how to realize a given
$\mathcal{M}^v$ with an event configuration of minimal cost.

\begin{theorem}\label{thm-best-response}
Given $\mathcal{M}^v = \{m_{v, u}^v: u\neq v\}$, any optimal
strategy $\mathcal{P}_v$ realizing $\mathcal{M}^v$ satisfies
\begin{equation}\label{eq-best-response}\mbox{$\sum_{i\in [k_v]}$}\, r_{v, i} = \max_{u\neq v} m_{v,
u}^v\,.\end{equation}
\end{theorem}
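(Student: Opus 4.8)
The plan is to show that any event configuration $\mathcal{P}_v$ realizing $\mathcal{M}^v$ must have total rate at least $\max_{u\neq v} m_{v,u}^v$, and then that this lower bound is achieved by an optimal strategy. The lower bound is immediate: fix the individual $u^\star$ attaining $\max_u m_{v,u}^v$. Every unit of rate contributing to $M_{v,u^\star}^v$ comes from some event $P_{v,i}$ containing $u^\star$, so $\sum_{i : u^\star \in P_{v,i}} r_{v,i} \geq m_{v,u^\star}^v$, and a fortiori $\sum_{i\in[k_v]} r_{v,i} \geq m_{v,u^\star}^v = \max_{u\neq v} m_{v,u}^v$. This direction uses nothing about optimality.

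For the reverse direction, I would argue that if $\sum_i r_{v,i} > \max_u m_{v,u}^v$, then $\mathcal{P}_v$ is not optimal — one can strictly reduce the cost $\mathcal{C}_v(\mathcal{P}_v) = \sum_i r_{v,i}(c|P_{v,i}\setminus\{v\}| + b)$ while still realizing the same $\mathcal{M}^v$. The key structural observation is that the invitation rates $\mathcal{M}^v$ only constrain, for each $u$, the total rate of events containing $u$; they do not care how those events are bundled. So I would exhibit a single "canonical" realization: for instance, list the distinct values among $\{m_{v,u}^v\}$ in increasing order $0 = t_0 < t_1 < \dots < t_p$, and for each level $j\in[p]$ hold one event at rate $t_j - t_{j-1}$ whose invitee set is $\{u : m_{v,u}^v \geq t_j\}$. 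This "layer-cake" construction realizes $\mathcal{M}^v$ exactly, and its total rate is $t_p = \max_u m_{v,u}^v$. It then suffices to show \emph{any} optimal strategy has this same total rate; since the layer-cake strategy already achieves the minimum possible total rate and (one should check) no configuration realizing $\mathcal{M}^v$ can do better on \emph{both} total rate and per-event sizes simultaneously, the conclusion follows. Concretely: given any realizing $\mathcal{P}_v$ with $\sum_i r_{v,i} > \max_u m_{v,u}^v$, I would show the layer-cake strategy has cost no larger (the $b$ term is weakly smaller since it has few events, and the $c\sum_i r_{v,i}|P_{v,i}\setminus\{v\}|$ term equals $c\sum_{u} m_{v,u}^v$ for the layer-cake but is $\geq c\sum_u m_{v,u}^v$ in general, because $\sum_i r_{v,i}|P_{v,i}\setminus\{v\}| = \sum_{u\neq v}\sum_{i: u\in P_{v,i}} r_{v,i} \geq \sum_{u\neq v} m_{v,u}^v$, with the last inequality possibly strict), hence strictly smaller unless equality holds throughout — and equality in $\sum_i r_{v,i}|P_{v,i}\setminus\{v\}| = \sum_u m_{v,u}^v$ forces, together with $b>0$ controlling the number of events, that $\sum_i r_{v,i} = \max_u m_{v,u}^v$.

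I expect the main obstacle to be the bookkeeping in the comparison of costs: one must be careful that the $b$-term (fixed cost per event) in $\mathcal{C}_v$ does not secretly reward splitting rate across many small events in a way that could make a configuration with larger total rate cheaper. Since $b>0$, more events only costs more, so this works in our favor; the delicate point is verifying that the inequality $\sum_i r_{v,i}|P_{v,i}\setminus\{v\}| \geq \sum_{u\neq v} m_{v,u}^v$ is tight \emph{precisely} for the layer-cake-type nested configurations, and that tightness there is equivalent to $\sum_i r_{v,i} = \max_u m_{v,u}^v$. The cleanest route is probably to fix the realized $\mathcal{M}^v$, observe that the cost decomposes as $c\sum_{u\neq v} m_{v,u}^v + (\text{slack}) + b\cdot k_v$ where the slack is nonnegative and the whole expression is minimized by driving $k_v$ down and slack to zero, and then read off that a minimizer has total rate exactly $\max_u m_{v,u}^v$.
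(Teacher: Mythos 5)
Your overall route is the paper's: build the nested (layer-cake) realization whose total rate is $\max_{u\neq v} m_{v,u}^v$, then show that any realizing strategy with larger total rate is strictly costlier, via the exchange $\sum_i r_{v,i}|P_{v,i}\setminus\{v\}| = \sum_{u\neq v}\sum_{i:u\in P_{v,i}} r_{v,i}$. However, your cost comparison mishandles the fixed-cost term, and that term is exactly what the theorem turns on. In the model the fixed cost is charged per unit of rate, i.e.\ it contributes $b\sum_i r_{v,i}$, not $b\cdot k_v$; so ``the $b$ term is weakly smaller since it has few events'' is the wrong reason, and your closing decomposition $c\sum_u m_{v,u}^v + (\text{slack}) + b\,k_v$ is not the model's cost. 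Moreover, since realizing $\mathcal{M}^v$ means $M_{v,u}^v = m_{v,u}^v$ exactly, the $c$-part equals $c\sum_{u\neq v} m_{v,u}^v$ for \emph{every} realizing strategy (your ``possibly strict'' caveat is moot), so the cost of any realizing strategy is exactly $c\sum_{u\neq v} m_{v,u}^v + b\sum_i r_{v,i}$; with $b>0$ the theorem is then immediate from $\sum_i r_{v,i}\geq \max_{u\neq v} m_{v,u}^v$, with equality attained by the layer-cake. That is precisely the paper's one-line computation.

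Your proposed strictness step does not go through as stated: equality $\sum_i r_{v,i}|P_{v,i}\setminus\{v\}| = \sum_u m_{v,u}^v$ does \emph{not} force $\sum_i r_{v,i} = \max_u m_{v,u}^v$. For instance, invite each $u$ alone in a singleton event at rate $m_{v,u}^v$: the $c$-term is unchanged, yet the total rate is $\sum_u m_{v,u}^v$, which generally exceeds the maximum; the excess cost appears only in the $b\sum_i r_{v,i}$ term. And if the fixed cost really were proportional to the number of events, the theorem would be false: with $m_{v,u_1}^v=1$ and $m_{v,u_2}^v=1/2$, the strategy with events $\{u_1,u_2\}$ at rate $1/2$ and $\{u_1\}$ at rate $1/2$ and the strategy with events $\{u_1\}$ at rate $1$ and $\{u_2\}$ at rate $1/2$ both have two events and the same $c$-term, though only the first satisfies \eqref{eq-best-response}. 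So the gap is not in the construction, which coincides with the paper's nested strategy, but in the final accounting; replacing it with the exact decomposition above repairs the proof.
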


\begin{proof}
We first demonstrate the existence of a strategy that realizes
$\mathcal{M}^v$ and satisfies \eqref{eq-best-response}. Take  all
the strictly positive $m_{v, u}^v$'s and arrange them in a
decreasing order such that $m_{v, v_1}^v \geq \ldots\geq m_{v,
v_\ell} >0$. Let $P_{v, i} = \{ v, v_1, \ldots, v_i\}$ for $i=1,
\ldots, \ell$. We define the event strategy
$\mathcal{P}_v^{\mathrm{nest}}$ to be the collection of events
$\{P_{v, 1}, \ldots, P_{v, \ell}\}$ at corresponding rates
$$r_{v, i} = m_{v, v_i}^v - m_{v, v_{i+1}}^v \mbox{ for } i = 1, \ldots, \ell,$$
where we used the convention $m_{v, v_{\ell+1}}^v = 0$. It is
straightforward to verify that the strategy
$\mathcal{P}_v^{\mathrm{nest}}$ realizes $\mathcal{M}^v$ and also
satisfies
\eqref{eq-best-response} (note that in this case $k_v = \ell$).

Now, let $\mathcal{P}_v$ be an arbitrary strategy that realizes
$\mathcal{M}^v$ but violates \eqref{eq-best-response}. Observe that
$\mathcal{P}_v$ must satisfy $\sum_{i=1}^{k_v} r_{v, i} \geq
\max_{u\neq v} m_{v, u}^v$. Then, assuming \eqref{eq-best-response}
is violated, we  have $\sum_{i=1}^{k_v} r_{v, i} > \max_{u\neq v}
m_{v, u}^v$. Therefore, we conclude that
\begin{align*}
\mathcal{C}_v(\mathcal{P}_v)& = \sum_{i=1}^{k_v} r_{v, i} (c |P_{v,
i}| + b) = b \sum_{i=1}^{k_v} r_{v, i} + c \sum_{u\neq v} r_{v, i}
\one_{\{u\in P_{v, i}\}} = b \sum_{i=1}^{k_v} r_{v, i} + c
\sum_{u\neq v} m_{v, u}^v\\
&> b \max_{u\neq v}m_{v, u}^v + c \sum_{u\neq v} m_{v, u}^v =
\mathcal{C}_v(\mathcal{P}_v^{\mathrm{nest}})\,.\qedhere
\end{align*}
\end{proof}


\begin{cor}
Any stable strategy profile $\mathcal{P}$ satisfies that $
\sum_{i=1}^{k_v} r_{v, i} = \max_{u\neq v} M_{v, u}^v$, for $v\in
V$.
\end{cor}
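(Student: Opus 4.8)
The plan is to obtain the corollary as an immediate consequence of Theorem~\ref{thm-best-response}, once we isolate the following observation: with the other agents' strategies held fixed, agent $v$'s benefit depends on her own strategy $\mathcal{P}_v$ only through the invitation rates $\mathcal{M}^v = \{M_{v,u}^v : u \neq v\}$ that $\mathcal{P}_v$ realizes. Granting this, a Nash strategy $\mathcal{P}_v$ must be a minimum-cost way of realizing its own $\mathcal{M}^v$ (otherwise $v$ could switch to a cheaper strategy realizing the same $\mathcal{M}^v$, keeping her benefit but lowering her cost, hence strictly raising her utility in \eqref{eq-utility}); and Theorem~\ref{thm-best-response} asserts that every such minimum-cost strategy satisfies $\sum_{i\in[k_v]} r_{v,i} = \max_{u\neq v} M_{v,u}^v$, which is exactly the assertion of the corollary.

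First I would verify the factorization of the benefit. By \eqref{eq-friends}, the set $N_v$, and hence the benefit $a|N_v|$, is a function only of the meeting rates $\{M_{u,v} : u \neq v\}$ of the pairs that contain $v$. By \eqref{eq-meet-rate} we may split $M_{u,v} = M_{u,v}^v + \sum_{w\neq v} M_{u,v}^w$; each term $M_{u,v}^w$ depends only on $\mathcal{P}_w$, so the partial sum over $w \neq v$ is frozen once the other agents' strategies are fixed, while $M_{u,v}^v = M_{v,u}^v$ is precisely the entry of $\mathcal{M}^v$ indexed by $u$. Therefore two strategies for $v$ that realize the same $\mathcal{M}^v$ induce the same meeting rates for all pairs containing $v$, hence the same $N_v$ and the same benefit. (In particular this already shows that at equilibrium $v$ hosts no event omitting $v$, since such an event adds to the cost but not to any $M_{v,u}^v$; but we will not need this explicitly.)

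Then I would conclude. Fix a stable profile $\mathcal{P}$ and a vertex $v$, and let $\mathcal{M}^v$ be the invitation rates realized by $\mathcal{P}_v$. Suppose $\mathcal{P}_v$ were not of minimal cost among all strategies realizing $\mathcal{M}^v$. Then there is a strategy $\mathcal{P}_v'$ realizing $\mathcal{M}^v$ with $\mathcal{C}_v(\mathcal{P}_v') < \mathcal{C}_v(\mathcal{P}_v)$; by the previous paragraph $v$'s benefit under $\mathcal{P}_v'$ equals her benefit under $\mathcal{P}_v$, so by \eqref{eq-utility} deviating to $\mathcal{P}_v'$ strictly increases $v$'s utility, contradicting that $\mathcal{P}$ is a Nash equilibrium. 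Hence $\mathcal{P}_v$ is an optimal strategy realizing $\mathcal{M}^v$ in the sense of Theorem~\ref{thm-best-response}, and applying that theorem with $m_{v,u}^v = M_{v,u}^v$ gives $\sum_{i=1}^{k_v} r_{v,i} = \max_{u\neq v} M_{v,u}^v$, as desired. There is essentially no obstacle here; the only point requiring care is the benefit-factorization step, i.e., the fact that an agent's own events influence her neighbourhood exclusively through the meeting rates of pairs to which she herself belongs, which is what makes restricting attention to cost-minimal realizations of $\mathcal{M}^v$ lossless at equilibrium.
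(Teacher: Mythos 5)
Your proposal is correct and follows exactly the route the paper intends: the corollary is stated as an immediate consequence of Theorem~\ref{thm-best-response}, with the implicit reasoning being precisely your deviation argument (a Nash strategy must be a cost-minimal realization of its own invitation rates $\mathcal{M}^v$, since the benefit $a|N_v|$ depends on $\mathcal{P}_v$ only through $\mathcal{M}^v$). Your explicit verification of the benefit-factorization step is a faithful filling-in of what the paper leaves unsaid, not a different approach.
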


\begin{cor}
\label{cor.argmax}
If $\mathcal{P}$ is a stable event configuration and
$u \in \arg\max_{u\neq v}\{M_{v,u}^v\}$ then $u \in P_{v,i}$
for all $1 \leq i \leq k_v$.
\end{cor}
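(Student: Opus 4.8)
The plan is to read the statement off from the corollary immediately preceding it together with the strict positivity of event rates. Fix a vertex $v$ and pick $u^\star \in \arg\max_{u\neq v}\{M_{v,u}^v\}$. If $k_v = 0$ the claim is vacuous, so assume $k_v \ge 1$. Since $\mathcal{P}$ is stable, $\mathcal{P}_v$ is an optimal response for $v$ against the other players' strategies, and therefore by the previous corollary $\sum_{i=1}^{k_v} r_{v,i} = \max_{u\neq v} M_{v,u}^v = M_{v,u^\star}^v$.

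Next I would expand the right-hand side using the definition $M_{v,u^\star}^v = \sum_{i\in[k_v]} r_{v,i}\one_{\{v,\,u^\star\in P_{v,i}\}}$. We may assume $v \in P_{v,i}$ for every $i$, since adding $v$ to her own events does not change the cost in \eqref{eq-party-cost}; hence this sum equals $\sum_{i\in[k_v]} r_{v,i}\one_{\{u^\star\in P_{v,i}\}} \le \sum_{i\in[k_v]} r_{v,i}$. Comparing with the previous display shows that this inequality is in fact an equality, and since every $r_{v,i} > 0$ (rates are positive by the definition of an event), this forces $\one_{\{u^\star\in P_{v,i}\}} = 1$ for all $i \in [k_v]$, i.e.\ $u^\star \in P_{v,i}$ for all $1 \le i \le k_v$, which is the assertion.

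I do not expect any real obstacle here: the only points needing a brief remark are the harmless normalization that $v$ attends each of her own events and the observation that it is precisely the strict positivity of the rates that upgrades the equality of the two sums to membership of $u^\star$ in every event. Beyond invoking the preceding corollary, no further use of the stability hypothesis is needed.
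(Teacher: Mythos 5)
Your argument is correct and is precisely the immediate consequence the paper intends (Corollary~\ref{cor.argmax} is stated without proof, as following from the preceding corollary to Theorem~\ref{thm-best-response}): the equality $\sum_{i}r_{v,i}=M_{v,u^\star}^v$ together with the strict positivity of all rates forces $u^\star\in P_{v,i}$ for every $i$. One small nit: instead of ``we may assume $v\in P_{v,i}$'' (which modifies the configuration, whereas the conclusion concerns the actual events of the given one), it is cleaner to note that since $b>0$ an event omitting its host incurs positive cost but contributes nothing to $M_{v,u}$ for any $u$, so stability itself already forces $v\in P_{v,i}$ for all $i$.
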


We next aim at a set of criteria for an event configuration
$\mathcal{P}$ to be stable. Given agent $v$ and strategies
$\mathcal{P}\backslash\mathcal{P}_{v}$ of the other agents, define
$E_{v,u} =  1 - \sum_{w\neq v}M_{v, u}^w$. We think of $E_{v,u}$ as
the minimal rate at which $v$ should invite $u$ in order to create a
connection. Let $T_v = \{u\in V: 0< E_{v,u} < \gamma\}$. Finally,
given an event profile $\mathcal{P}_v$ for agent $v$, let $I_v =
\{u\in V: M_{v,u}^v> 0 \}$ denote the set of \emph{invitees} for
$v$.
\begin{theorem}
\label{thm.stable}
An event configuration $\mathcal{P}$ is stable
if and only if, for all $v\in V$,
\begin{equation}\label{eq-condition}
I_v \subseteq T_v \,, \quad M_{v,u}^v = E_{v,u} \,, \mbox{ and } \
E_{v,u} < E_{v,w} \mbox{ for all } u\in I_v \mbox{ and } w\in
T_v\setminus I_v.
\end{equation}
\begin{proof}
We demonstrate the required conditions in order. Keep in mind that
the quantities $E_{v,u}$ completely capture the impact of a given
strategy $\mathcal{P}_v$ on the utility of $v$. First of all, $v$
should only hold events that include agents $u \in T_v$, since for
any other agent the marginal utility of supporting the connection is
negative. Also, there is no point for $v$ to realize an invitation
rate $M_{v, u}^v
> E_{v, u}$ for any agent $u$, since $v$ and $u$ will be connected
as long as $M_{v, u}^v = E_{v, u}$ and making $M_{v, u}^v$ larger
will only cost more to $v$. Similarly, there is no point to make
$0<M_{v, u}^v < E_{v, u}$, since it will incur a cost but offers no
benefit. Furthermore, for $x, y\in T_v$ such that $E_{v, x} \geq
E_{v, y}$, if $v$ profits by making a connection with $x$, it must
profit by doing so with $y$ (note that it may not be optimal for $v$
to support connections with \emph{all} agents in $T_v$, due to the
fixed cost component $b$ in the utility model).

\end{proof}
\end{theorem}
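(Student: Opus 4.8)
The plan is to reduce agent $v$'s best-response problem to a small optimization over which agents $v$ should support, using the fact --- already exploited in the proof of Theorem~\ref{thm-best-response} --- that the numbers $\{E_{v,u}\}_{u\neq v}$ capture the entire dependence of $\mathcal U_v$ on $\mathcal P_v$. Indeed $u\in N_v$ precisely when $M^v_{v,u}\ge E_{v,u}$, and by Theorem~\ref{thm-best-response} the cheapest realization of any invitation profile $\{m^v_{v,u}\}$ costs $c\sum_u m^v_{v,u}+b\max_u m^v_{v,u}$ (the nested strategy). Hence, after replacing $\mathcal P_v$ by its nested realization --- which never decreases $\mathcal U_v$ --- the utility of $v$ is a function of the rate vector $(m_u):=(m^v_{v,u})_u$ alone, namely
\[
\mathcal U_v \;=\; a\,\bigl|\{u:m_u\ge E_{v,u}\}\bigr|\;-\;c\sum_u m_u\;-\;b\max_u m_u ,
\]
and $\mathcal P$ is stable iff, for every $v$, this quantity is maximized at $v$'s current rate vector.

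Next I would bring the rate vector to a canonical form. Raising $m_u$ above $\max(E_{v,u},0)$ only raises cost without enlarging $N_v$, and a rate strictly between $0$ and $E_{v,u}$ pays a cost for nothing; so in any best response $m_u\in\{0,E_{v,u}\}$, and $m_u=0$ whenever $E_{v,u}\le 0$ (those $u$ are already connected to $v$ for free). This gives $M^v_{v,u}=E_{v,u}$ on $I_v$ and shows every invitee has $E_{v,u}>0$. Moreover if $E_{v,u}\ge\gamma$ then supporting $u$ costs at least $cE_{v,u}\ge a$ just in the $c\sum$ term, so dropping $u$ does not lower $\mathcal U_v$; hence $I_v\subseteq T_v$. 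After these reductions the residual question is purely combinatorial: choose $S\subseteq T_v$ to maximize $\Phi(S):=a|S|-c\sum_{u\in S}E_{v,u}-b\max_{u\in S}E_{v,u}$ (with $\max_\emptyset=0$), and $\mathcal P$ is stable iff $I_v\in\arg\max_{S\subseteq T_v}\Phi(S)$ for all $v$.

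To finish I would characterize the maximizers of $\Phi$. For a fixed size $|S|=m$, both $\sum_{u\in S}E_{v,u}$ and $\max_{u\in S}E_{v,u}$ are minimized by taking the $m$ agents of $T_v$ with the smallest $E$-values, so every maximizer is a ``lowest-$E$ prefix'' of $T_v$; this is exactly the separation $E_{v,u}<E_{v,w}$ for $u\in I_v$, $w\in T_v\setminus I_v$, up to indifference among equal $E$-values (harmless under a tie-breaking convention or mild genericity). For the ``only if'' direction, if $\mathcal P$ violates any of these conditions then one of the moves above is a strictly improving deviation, contradicting stability. For the ``if'' direction, an arbitrary deviation by $v$ can be rewritten, via Theorem~\ref{thm-best-response} and the same prunings, as a prefix strategy without decreasing $\mathcal U_v$, and $I_v$ is optimal among prefixes.

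The step I expect to be the real obstacle is the choice of prefix \emph{length}. Writing $e_1\le e_2\le\cdots$ for the sorted $E$-values on $T_v$ and $g(m):=am-c\sum_{j\le m}e_j-be_m$, the increment $g(m)-g(m-1)=(a-ce_m)-b(e_m-e_{m-1})$ is not monotone in $m$: even though $a-ce_m>0$ on all of $T_v$, a large fixed-cost jump $b(e_m-e_{m-1})$ can make it suboptimal to support one more agent. So one cannot simply claim that $v$ supports all of $T_v$; the optimal length is genuinely part of the stability condition (equivalently, \eqref{eq-condition} should be read together with $\Phi(I_v)\ge\Phi(I'_v)$ for every prefix $I'_v$ of $T_v$, which the paper's own sketch foreshadows with the remark about the fixed cost $b$). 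Carrying the $b$-term correctly through both directions of the equivalence is the one piece of bookkeeping that is not routine.
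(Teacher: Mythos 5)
Your argument follows essentially the same route as the paper's own (sketch) proof: the $E_{v,u}$'s capture all of $v$'s incentives, invitees must lie in $T_v$ with $M_{v,u}^v=E_{v,u}$, and invitees are selected in increasing order of $E$-value; you simply carry the sketch further by reducing, via Theorem~\ref{thm-best-response}, to the maximization of $\Phi(S)=a|S|-c\sum_{u\in S}E_{v,u}-b\max_{u\in S}E_{v,u}$ over $S\subseteq T_v$ and characterizing its maximizers as lowest-$E$ prefixes. Your closing caveat is also well taken and is really a comment on the statement rather than a flaw in your proof: as written, \eqref{eq-condition} is necessary but not by itself sufficient (the empty strategy satisfies it vacuously, and with $b>0$ the optimal prefix length is an additional requirement, exactly the point conceded in the paper's parenthetical about the fixed cost), and the strict inequality $E_{v,u}<E_{v,w}$ is necessary only up to ties in the $E$-values; the results downstream (Corollary~\ref{cor.delta.bound}, Theorem~\ref{thm.weak.ties}) use only the necessity direction, which both you and the paper establish.
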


\begin{cor}
\label{cor.delta.bound}
In any stable event configuration, $M_{v,u}^v < \gamma$ for all $u,v \in V$,
and $M_{v,u}^v > 0$ implies $u \in N_v$.
\end{cor}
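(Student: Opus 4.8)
The plan is to read both assertions off directly from the stability characterization of Theorem~\ref{thm.stable}; the argument is essentially bookkeeping with the definitions of $E_{v,u}$, $T_v$, $I_v$, and $N_v$, so I do not expect a genuine obstacle. Fix a stable configuration $\mathcal{P}$ and agents $u,v\in V$.

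For the bound $M_{v,u}^v<\gamma$: if $M_{v,u}^v=0$ there is nothing to show, since $\gamma=a/c>0$. Otherwise $u\in I_v$ by definition of $I_v$, and Theorem~\ref{thm.stable} gives both $u\in I_v\subseteq T_v$ and $M_{v,u}^v=E_{v,u}$. Since $T_v=\{u:0<E_{v,u}<\gamma\}$ by definition, we conclude $M_{v,u}^v=E_{v,u}<\gamma$.

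For the implication $M_{v,u}^v>0\Rightarrow u\in N_v$: assuming $M_{v,u}^v>0$ we again have $u\in I_v$, so $M_{v,u}^v=E_{v,u}=1-\sum_{w\neq v}M_{v,u}^w$ by the definition of $E_{v,u}$ and the equality $M_{v,u}^v=E_{v,u}$ from Theorem~\ref{thm.stable}. Rearranging yields $M_{v,u}^v+\sum_{w\neq v}M_{v,u}^w=1$, and using the symmetry $M_{v,u}^w=M_{u,v}^w$ together with \eqref{eq-meet-rate} this says exactly $M_{u,v}=1$. As $r_0=1$, the definition \eqref{eq-friends} of $N_v$ then gives $u\in N_v$.

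The only thing to be careful about is not conflating the per-organizer rate $M_{v,u}^v$ with the total meeting rate $M_{u,v}=\sum_w M_{u,v}^w$, and using the ``no overshoot'' half of Theorem~\ref{thm.stable} (namely $M_{v,u}^v=E_{v,u}$, not merely $M_{v,u}^v\ge E_{v,u}$): it is precisely this exact equality that forces $M_{u,v}=1$ and hence makes every invitee a genuine neighbor.
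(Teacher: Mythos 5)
Your proof is correct and is exactly the intended reading of the statement as an immediate consequence of Theorem~\ref{thm.stable} (the paper offers no separate argument): membership $u\in I_v\subseteq T_v$ gives $M_{v,u}^v=E_{v,u}<\gamma$, and the exact equality $M_{v,u}^v=E_{v,u}$ forces the total meeting rate $M_{u,v}=1\geq r_0$, hence $u\in N_v$. Nothing to add.
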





\section{Properties of Supportable Networks}

We now wish to analyze the properties of networks that are supportable
by stable event configurations.  We begin by considering simple examples, in
order to build some intuition for the structures that can arise in stable
networks.  We then consider the clustering coefficient and average degree of
supportable networks.

\subsection{Examples}

In this section we give some simple examples of network structures, along with
necessary conditions for them to be supportable.  We begin by noting that if $\gamma > 1$
and $b$ is sufficiently small, then the complete graph is the only supportable graph.

\begin{fact}
\label{fact.complete}
If $\gamma > 1$ and $b < c(\gamma-1)$ then $K_n$ is the unique supportable graph.
\end{fact}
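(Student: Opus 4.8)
The plan is to show two things: first, that the complete graph $K_n$ is supportable when $\gamma > 1$ and $b < c(\gamma-1)$; and second, that no proper subgraph of $K_n$ can arise at a stable event configuration under these parameters. For the first part, I would exhibit an explicit stable configuration: let one distinguished agent $v_0$ hold a single event $P = V$ at rate $r = 1$, and let everyone else hold no events. Then every pair meets at rate exactly $1$, so the social network is $K_n$. It remains to check stability via Theorem \ref{thm.stable} (or more directly via Corollary \ref{cor.delta.bound} and the best-response analysis). For $v_0$: dropping any invitee $u$ would sever the link to $u$ unless $v_0$ re-invites $u$ at rate $\geq 1$; keeping $u$ in the event costs $c$ in rate-weighted terms but yields benefit $a = c\gamma > c$, so $v_0$ has no incentive to deviate (the fixed cost $b$ is already sunk in the single event). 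For any other agent $w$: $w$ currently pays nothing and is connected to everyone; $w$ has no profitable deviation since adding events only adds cost. One must double-check that $v_0$ does not want to \emph{reduce} the rate or split the event, but the nested-event structure from Theorem \ref{thm-best-response} already tells us $v_0$'s configuration is cost-optimal for the invitation rates it realizes, and realizing those rates is worthwhile because $\gamma>1$; the condition $b < c(\gamma-1)$ is what guarantees that holding the event at all beats holding no event (the net gain is $(n-1)(a - c) - b = (n-1)c(\gamma-1) - b > 0$ for $n \geq 2$).

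For the uniqueness direction, suppose $\mathcal{P}$ is stable and some edge $\{u,v\}$ is \emph{absent}, i.e., $M_{u,v} < 1$. The key observation is that then $E_{v,u} = 1 - \sum_{w \neq v} M_{v,u}^w > M_{v,u}^v \geq 0$, and I want to argue $E_{v,u} < \gamma$, so that $u \in T_v$. Indeed $\sum_{w\neq v} M_{v,u}^w \geq 0$ gives $E_{v,u} \leq 1 < \gamma$. By Theorem \ref{thm.stable}, since $u \in T_v \setminus I_v$ (as $M_{v,u}^v = 0 < E_{v,u}$ would be required if $u \in I_v$, but $u \notin N_v$ contradicts $M_{v,u}^v = E_{v,u}$ giving a connection — so indeed $u \notin I_v$), stability forces $E_{v,w} > E_{v,u}$ for every $w \in I_v$; but also, the reason $v$ doesn't add $u$ to $I_v$ must be the fixed-cost term $b$. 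The point is: $v$ could instead \emph{add $u$ to every existing event} $P_{v,i}$ (or, if $v$ holds no events, add one event $\{v,u\}$ at rate $E_{v,u} \leq 1$), raising $M_{v,u}^v$ to $E_{v,u}$ and creating the connection. This costs at most $c \cdot E_{v,u} + b\cdot\mathbf{1}_{\{k_v = 0\}} \leq c + b < c + c(\gamma-1) = c\gamma = a$ and yields benefit $a$, a strict improvement — contradicting stability. The only subtlety is the case $k_v = 0$, where the fixed cost $b$ is incurred; but that is exactly where the hypothesis $b < c(\gamma-1)$ is used again, since then the cost is at most $c\cdot E_{v,u} + b \leq c + b < a$.

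The main obstacle I anticipate is handling the bookkeeping in the uniqueness argument carefully: one must make sure that when $v$ augments its events to connect to $u$, it does not inadvertently create or destroy other connections in a way that changes the net benefit calculation, and one must correctly treat the case where $v$ already holds events versus the case $k_v = 0$ where adding a fresh event pays the fixed cost $b$. In both cases the inequality $b < c(\gamma - 1)$ is precisely calibrated to make the single-edge-addition deviation profitable, which is a good sanity check that this is the right approach. A secondary point of care is verifying that the explicit star-of-one-event configuration in the existence direction genuinely satisfies all three clauses of \eqref{eq-condition}, in particular that $E_{v_0,u} = 0$ for all $u$ (since $M_{v_0,u}^{v_0} = 1$ and $\sum_{w\neq v_0} M_{v_0,u}^w = 0$ gives $E_{v_0,u} = 1 - 0 = 1$... ) — here I would instead phrase stability directly in terms of no agent having a profitable deviation rather than mechanically checking \eqref{eq-condition}, since the configuration is simple enough that direct verification is cleaner.
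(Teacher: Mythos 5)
Your proposal is correct and follows essentially the same route as the paper: the same existence construction (one agent hosting a single event $P=V$ at rate $1$) and the same uniqueness argument (a missing pair makes a cheap two-person event a profitable unilateral deviation, with $b<c(\gamma-1)$ exactly ensuring $a-c-b>0$). One small caveat: in the uniqueness step the variant ``add $u$ to every existing event'' should be dropped, since it raises $M_{v,u}^v$ only to $\sum_i r_{v,i}$, which need not reach $E_{v,u}$, and its cost is $c\sum_i r_{v,i}$ rather than $cE_{v,u}$; your fallback of a fresh event $\{v,u\}$ at rate $E_{v,u}\le 1$ works in all cases (whether or not $k_v=0$) and is precisely the paper's deviation.
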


Given the above result, we will focus on the case $\gamma < 1$.
In what follows we will also suppose that $b = 0^+$ is set to be arbitrarily small.


\begin{table}
\begin{center}
\begin{tabular}{|c|c|c|}
\hline
\includegraphics[width=0.42in, viewport = 0 0 80 80]{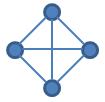} &
\includegraphics[width=0.4in]{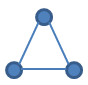} &
\includegraphics[width=0.45in]{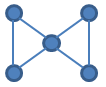} \\
$\gamma > 1/4$ & $\gamma > 1/3$ & $1/3 < \gamma < 1$ \\
\hline \hline
\includegraphics[width=0.5in]{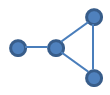} &
\includegraphics[width=0.5in, viewport = 0 0 80 80]{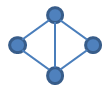} &
\includegraphics[width=0.4in]{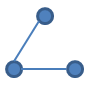} \\
$1/2 \leq \gamma < 1$ &
$1/2 \leq \gamma < 1$ &
Not supportable \\
\hline
\end{tabular}
\end{center}
\caption{Sample connection graphs, with the range of parameter $\gamma$ in which they are supportable as strong subgraphs.}
\label{fig:examples}
\end{table}




We say that $H$ is a \emph{strong subgraph} of network $G$ if $H$ is a subgraph of $G$, and moreover, for each $u,v \in H$, $N_u \cap N_v \subseteq H$.  We consider several examples of small graphs and study when they can be strong subgraphs of a supportable graph.

\begin{fact}
\label{fact.supportable.clique} Graph $K_\ell$ can be a strong
subgraph of a supportable network if and only if $\gamma >
\frac{1}{\ell}$.
\end{fact}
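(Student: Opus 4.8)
The plan is to prove the two implications separately: the converse (``only if'') from Corollary~\ref{cor.delta.bound} together with the strong-subgraph hypothesis, and the forward (``if'') direction from an explicit symmetric construction checked against Theorem~\ref{thm.stable}.

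\emph{Only if.} Suppose $G$ is supported by a stable configuration $\mathcal P$ and $H=K_\ell$ is a strong subgraph of $G$ on vertices $v_1,\dots,v_\ell$; fix $i\neq j$, so that $M_{v_i,v_j}\ge 1$ since $v_iv_j\in E(G)$. The crucial step is to show that every organizer $w$ with $M_{v_i,v_j}^w>0$ lies in $H$. If $w\in\{v_i,v_j\}$ this is trivial; otherwise $w$ holds a positive-rate event containing $v_i$ and $v_j$, and since in a stable configuration an organizer attends all of its own events (an unattended event would be pure cost), that event also contains $w$, so $M_{w,v_i}^w>0$ and $M_{w,v_j}^w>0$; by the second part of Corollary~\ref{cor.delta.bound}, $v_i,v_j\in N_w$, i.e.\ $w\in N_{v_i}\cap N_{v_j}$, and since $H$ is a strong subgraph this forces $w\in H$. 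Hence $M_{v_i,v_j}=\sum_{k=1}^{\ell}M_{v_i,v_j}^{v_k}$, a sum of only $\ell$ terms. Each term obeys $M_{v_i,v_j}^{v_k}\le\sum_s r_{v_k,s}=\max_{u\neq v_k}M_{v_k,u}^{v_k}<\gamma$, where the middle equality is the corollary to Theorem~\ref{thm-best-response} and the strict inequality is the first part of Corollary~\ref{cor.delta.bound}. Summing these $\ell$ strict inequalities gives $1\le M_{v_i,v_j}<\ell\gamma$, i.e.\ $\gamma>1/\ell$.

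\emph{If.} Assuming $\gamma>1/\ell$, take $V=\{v_1,\dots,v_\ell\}$ and let each $v_i$ hold the single event $P=V$ at rate $1/\ell$. Then $M_{v_i,v_j}=\ell\cdot\tfrac1\ell=1$ for every pair, so the supported network is $K_\ell$, which is trivially a strong subgraph of itself. It remains to check stability via Theorem~\ref{thm.stable}. For each $v_i$ and each $j\neq i$ one computes $E_{v_i,v_j}=1-\sum_{k\neq i}M_{v_i,v_j}^{v_k}=1-(\ell-1)\tfrac1\ell=\tfrac1\ell$, which lies in $(0,\gamma)$ precisely because $\gamma>1/\ell$; hence $I_{v_i}=T_{v_i}=\{v_j:j\neq i\}$, the identity $M_{v_i,v_j}^{v_i}=\tfrac1\ell=E_{v_i,v_j}$ holds for all $j\neq i$, and the third condition is vacuous since $T_{v_i}\setminus I_{v_i}=\emptyset$ (here we use the standing assumption $b=0^+$). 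All three conditions of Theorem~\ref{thm.stable} are satisfied, so $\mathcal P$ is stable and $K_\ell$ is supportable.

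I expect the main obstacle to be the containment step in the ``only if'' direction --- ruling out organizers outside $K_\ell$ from supporting meetings between two of its vertices --- since that is the one place where the strong-subgraph hypothesis does real work and where the meeting rate gets capped at a sum of $\ell$ terms. After that, everything reduces to summing $\ell$ copies of the strict inequality $M_{v,u}^v<\gamma$; the strictness there is exactly what upgrades $\gamma\ge 1/\ell$ to the strict bound in the statement, while the rate-$1/\ell$ construction shows the bound is sharp.
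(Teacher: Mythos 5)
Your proof is correct and takes essentially the same route as the paper's: the same rate-$1/\ell$ all-invite construction for sufficiency, and for necessity the same count that $M_{u,v}$ is a sum of at most $\ell$ invitation rates, each strictly below $\gamma$ by Corollary~\ref{cor.delta.bound}. Your explicit argument that any third-party organizer supporting a meeting between two clique vertices must itself lie in the clique is precisely the step the paper leaves implicit in invoking the strong-subgraph hypothesis.
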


An important special case is 
an edge $(u,v)$ with $N_u \cap N_v = \emptyset$, which we call a \emph{local bridge}.  We show that each node is incident with at most one local bridge in a supportable network.


\begin{theorem}
\label{thm.weak.ties}
A supportable graph $G$ can contain $K_2$ as a strong subgraph only if $\gamma > 1/2$.  Moreover, each node can be contained in at most one such subgraph.
\end{theorem}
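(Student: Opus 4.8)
\emph{The bound $\gamma>1/2$.} A pair $(u,v)$ spans a strong copy of $K_2$ exactly when it is an edge with $N_u\cap N_v=\emptyset$, i.e.\ a local bridge. For such a pair no third party can support the meeting rate: if $w\notin\{u,v\}$ had $M_{u,v}^w>0$ then $w$ holds an event containing both $u$ and $v$, so $M_{w,u}^w>0$ and $M_{w,v}^w>0$, whence Corollary~\ref{cor.delta.bound} would put $w\in N_u\cap N_v$. Thus $1\le M_{u,v}=M_{u,v}^u+M_{u,v}^v$, and since Corollary~\ref{cor.delta.bound} also gives $M_{u,v}^u<\gamma$ and $M_{u,v}^v<\gamma$, we get $\gamma>1/2$.

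\emph{At most one local bridge per node.} Suppose, for contradiction, that a vertex $v$ is an endpoint of two distinct local bridges $(v,u_1)$ and $(v,u_2)$. We may assume $\gamma\le 1$, since for $\gamma>1$ Fact~\ref{fact.complete} forces $G=K_n$, which has no local bridge once $n\ge 3$. As in the first part, $M_{v,u_j}=M_{v,u_j}^v+M_{v,u_j}^{u_j}\ge 1$ with both summands $<\gamma$ for $j=1,2$. Since $u_2\in N_v$ and $N_v\cap N_{u_1}=\emptyset$ we have $u_2\notin N_{u_1}$ (symmetrically $u_1\notin N_{u_2}$), so $M_{u_1,u_2}<1$; moreover $M_{u_1,u_2}^{u_j}=0$ by Corollary~\ref{cor.delta.bound}, so neither of $u_1,u_2$ invites the other.

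\emph{The crux.} I claim that $v$ must invite \emph{each} of $u_1,u_2$ at the maximal invitation rate $R_v:=\sum_i r_{v,i}=\max_z M_{v,z}^v$ (the last equality being the Corollary of Theorem~\ref{thm-best-response}). Suppose not, say $M_{v,u_1}^v<R_v$, and pick $z\neq u_1$ attaining the maximum; then $z\in N_v$ by Corollary~\ref{cor.delta.bound}, and by Corollary~\ref{cor.argmax} $z$ lies in every event of $v$. Hence every event of $v$ containing $u_1$ also contains $z$, so $M_{z,u_1}\ge M_{v,u_1}^v$; but $z\in N_v$ and the bridge property of $(v,u_1)$ force $z\notin N_{u_1}$, so $M_{z,u_1}<1$. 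Consider the deviation in which $u_1$ adjoins $z$ to a sub-collection of its own events with $v$ of total rate $1-M_{z,u_1}$; this is feasible because those events have total rate $M_{v,u_1}^{u_1}\ge 1-M_{v,u_1}^v\ge 1-M_{z,u_1}$ and none of them contains $z$ (else $z\in N_{u_1}$). The deviation introduces no new events, so the fixed-cost term is unchanged, while the variable cost grows by $c\,(1-M_{z,u_1})\le c\,M_{v,u_1}^{u_1}<c\gamma=a$; against this $u_1$ gains the new edge $(u_1,z)$, worth at least $a$ --- a strict improvement, contradicting stability. Hence $M_{v,u_1}^v=R_v$, and likewise $M_{v,u_2}^v=R_v$, so by Corollary~\ref{cor.argmax} both $u_1$ and $u_2$ lie in every event of $v$.

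\emph{Conclusion.} Every event of $v$ then contains both $u_1$ and $u_2$, so $M_{u_1,u_2}^v=R_v=M_{v,u_1}^v$ and $M_{u_1,u_2}\ge M_{v,u_1}^v$. Re-running the previous deviation with $z$ replaced by $u_2$ --- $u_1$ adjoins $u_2$ to a rate-$(1-M_{u_1,u_2})$ sub-collection of its events with $v$, legitimate since $1-M_{u_1,u_2}\le 1-M_{v,u_1}^v\le M_{v,u_1}^{u_1}<\gamma$ and these events lack $u_2$ --- creates the edge $(u_1,u_2)$ at variable-cost increase below $a$ and no fixed-cost change, again strictly improving $u_1$'s utility. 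This contradiction finishes the proof. The delicate point is the choice of deviation: the obvious moves (having $v$ drop a bridge, or merely reorganizing $v$'s events) only reproduce $\gamma>1/2$; what makes the argument work is the forced rigidity that both bridge partners are invited maximally and so co-occur in every event of $v$, letting $u_1$ cheaply piggy-back a connection to $u_2$ onto the effort it already spends meeting $v$.
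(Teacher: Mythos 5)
Your proof is correct, and its engine is the same as the paper's: Corollary \ref{cor.argmax} forces a maximal invitee into every event of the doubly-bridged vertex, a local bridge can only be supported by its two endpoints with each contribution below $\gamma$ (Corollary \ref{cor.delta.bound}), and therefore a bridge partner can piggyback a connection to that ever-present invitee at marginal cost below $c\gamma=a$. Where you differ is in execution. The paper gets the contradiction in one step: choosing $x\in\arg\max_z M_{u,z}^u$ and any bridge partner $v\neq x$, it observes $E_{v,x}\le 1-M_{u,v}^u\le M_{v,u}^v=E_{v,u}$ and invokes the strict-preference condition of Theorem \ref{thm.stable} for $v$; it is irrelevant whether $v$ itself attains the maximum. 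You avoid Theorem \ref{thm.stable} altogether and build the profitable deviation by hand, which forces a two-stage argument: first both bridge partners must be invited at the maximal rate $R_v$ (else the non-maximal partner tops up a connection to the argmax invitee), and then, since both partners co-occur in every event of $v$, one partner tops up a connection to the other. Your route is longer, but it is self-contained at the level of explicit deviations (essentially re-proving the necessity half of Theorem \ref{thm.stable} in the two instances you need), which is arguably a virtue given how informally that theorem is justified. Two small points: your reduction to $\gamma\le 1$ via Fact \ref{fact.complete} implicitly uses the standing assumption that $b$ is arbitrarily small (the fact requires $b<c(\gamma-1)$), which matches the paper's conventions but should be said; and when you adjoin $z$ to a sub-collection of events of prescribed total rate you may need to split an event, which is harmless because the fixed cost $b$ is charged per unit rate rather than per event, so your claim that the fixed-cost term is unchanged survives even though ``no new events'' is not literally guaranteed.
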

\begin{proof}
The condition on $\gamma$ follows from Fact \ref{fact.supportable.clique}.  Next suppose for contradiction that $G$ is supported by a stable invitation graph and node $u$ is connected to multiple agents with whom he shares no common neighbours.  Choose $x \in \arg\max\{M_{u,x}^u\}$.  Then, in particular, there is some $v \neq x$ such that $v \in N_u$ and $u$ and $v$ do not share any common neighbours.  Corollary \ref{cor.argmax} then implies that $x \in P_{u,i}$ for all $1 \leq i \leq k_u$, and in particular for each $i$ in which $v \in P_{u,i}$.  We therefore have $E_{v,x} \leq 1 - M_{u,v}^u \leq M_{v,u}^v = E_{v,u}$, which contradicts Theorem \ref{thm.stable}.
\end{proof}

\begin{cor}
The star graph with more than $1$ leaf is not supportable.
\end{cor}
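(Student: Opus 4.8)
The plan is to derive this as an immediate consequence of Theorem~\ref{thm.weak.ties}. Suppose for contradiction that the star $K_{1,n}$ with center $u$ and leaves $v_1,\dots,v_n$, where $n\geq 2$, is supportable, and fix a stable event configuration supporting it. The first step is to observe that every edge incident to $u$ is a local bridge: in $K_{1,n}$ the only neighbour of any leaf $v_i$ is $u$ itself, so $N_u\cap N_{v_i}=\emptyset$. Consequently, for each $i$ the vertex set $\{u,v_i\}$ together with the edge joining them is a copy of $K_2$ that is a \emph{strong} subgraph of $K_{1,n}$, since the required containment $N_u\cap N_{v_i}=\emptyset\subseteq\{u,v_i\}$ is trivial.

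The second step is to invoke the final assertion of Theorem~\ref{thm.weak.ties}: in a supportable graph, a node can be contained in at most one strong subgraph isomorphic to $K_2$. But here the center $u$ is contained in the $n\geq 2$ distinct such subgraphs $\{u,v_1\},\dots,\{u,v_n\}$. This contradiction shows that $K_{1,n}$ is not supportable whenever $n\geq 2$.

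There is essentially no obstacle here; the argument is a one-line corollary once Theorem~\ref{thm.weak.ties} is available. The only point deserving a moment's care is verifying that each leaf-edge genuinely qualifies as a strong $K_2$-subgraph (rather than merely an edge that happens to be a local bridge inside some larger ambient graph), but this is automatic for the star because there are no further vertices whose neighbourhoods could violate the closure condition. It is also worth noting that the $\gamma>1/2$ threshold from Fact~\ref{fact.supportable.clique} plays no role in the conclusion: the star is excluded for all $\gamma$ by the ``at most one incident local bridge'' clause.
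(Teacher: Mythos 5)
Your argument is correct and is exactly the intended derivation: the paper states this corollary without proof as an immediate consequence of Theorem~\ref{thm.weak.ties}, whose ``moreover'' clause (each node lies in at most one strong $K_2$, i.e.\ is incident to at most one local bridge) is contradicted by the center of a star with at least two leaves, since every leaf-edge there has $N_u\cap N_{v_i}=\emptyset$. Your closing observation that the conclusion holds for all $\gamma$, independently of the $\gamma>1/2$ threshold, is also accurate.
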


For any $k > p \geq 1$, we will write $H_{k,p}$ for the graph on $2k - p$ vertices which consists of two $k$-cliques which share
$p$ vertices in common.  
Theorem \ref{thm.weak.ties} can be re-interpreted as demonstrating
that graph $H_{2,1}$ is not supportable.  We next demonstrate that,
for any $k$, $H_{k,p}$ is supportable if $k-p$ is not too small
(depending on $\gamma$).

\begin{fact}
\label{fact.bowtie}
For $k > 2$, a supportable graph can contain $H_{k,1}$ as a strong subgraph if and only if $\gamma > 1/k$.
For $p > 1$ and $k > p+1$, a supportable graph can contain $H_{k,p}$ as a strong subgraph if $\gamma > 1/(k-p)$.
\end{fact}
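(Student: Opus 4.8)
The plan is to obtain the ``only if'' part of the first statement from Fact~\ref{fact.supportable.clique}, and each ``if'' part by writing down an explicit event configuration and checking it against Theorem~\ref{thm.stable} (recall $\gamma<1$ and $b=0^{+}$ throughout). For necessity: if a supportable $G$ contains $H_{k,1}$ as a strong subgraph, pick one of its two $k$-cliques $C$; the common neighbours in $G$ of two vertices of $C$ lie in $H_{k,1}$ (because $H_{k,1}$ is strong), and inside $H_{k,1}$ they lie in $C$, so, $H_{k,1}$ being induced in $G$, $C$ is itself a strong subgraph of $G$ and Fact~\ref{fact.supportable.clique} forces $\gamma>1/k$. (This reduction also covers necessity wherever a $k$-clique appears inside a strong subgraph.)

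For sufficiency of the first statement, assume $\gamma>1/k$ and take $G=H_{k,1}$ with hub $z$ and private sides $A,B$ of size $k-1$. I would use the configuration in which each vertex of $A$ holds the one event $\{z\}\cup A$ at rate $t$, each vertex of $B$ holds $\{z\}\cup B$ at rate $t$, and $z$ holds the one event $\{z\}\cup A\cup B$ at rate $s$, where $s+(k-1)t=1$ so that each intended edge has meeting rate exactly $1$. The subtlety is that $z$ \emph{cannot} separate its two sides: by Theorem~\ref{thm-best-response} any cost-optimal realization of $z$'s invitation rates to $A\cup B$ fuses into a single event containing all of $A\cup B$, so a stable configuration is forced to produce a positive cross rate $M_{a_i,b_j}=s$. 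A routine count then gives $M_{u,v}=1$ on the edges, $M_{a_i,b_j}=s<1$ on the non-edges, $E_{v,u}=t$ for a private vertex and each of its neighbours, $E_{v,u}=s$ for $v=z$ and each neighbour, and $E_{a_i,b_j}=1-s$. Choosing $s$ in the interval $\big(\max\{0,1-(k-1)\gamma\},\ \min\{\gamma,1-\gamma\}\big)$ makes $s,t\in(0,\gamma)$ and $1-s\ge\gamma$, whence $I_v=T_v$ for every $v$ and the criterion of Theorem~\ref{thm.stable} is satisfied. That interval is nonempty precisely when $\gamma>1/k$ (here $k>2$ is what makes $1-(k-1)\gamma<1-\gamma$), which matches the necessity bound; for $k=2$ it is empty, as it must be by Theorem~\ref{thm.weak.ties}.

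For sufficiency of the second statement, assume $p\ge2$, $k\ge p+2$, $\gamma>1/(k-p)$, and take $G=H_{k,p}$ with shared set $S$ ($|S|=p$) and private sides $A,B$ of size $k-p$. Now the shared vertices should sponsor \emph{nothing}: the two private cliques already supply $M_{z_\ell,z_m}=2$ for $z_\ell,z_m\in S$, so any event held by $z_\ell$ would needlessly re-invite some $z_m$ (one computes $E_{z_\ell,z_m}<0$). So I would use: each vertex of $A$ holds $S\cup A$ at rate $1/(k-p)$, each vertex of $B$ holds $S\cup B$ at rate $1/(k-p)$, and every vertex of $S$ holds no event. A direct check gives $M_{u,v}=1$ on the $A$--$S$, $A$--$A$, $B$--$S$, $B$--$B$ edges, $M_{u,v}=2$ on the $S$--$S$ edges, and $M_{a_i,b_j}=0$; for a private vertex $a_i$ each neighbour $u$ satisfies $E_{a_i,u}=1/(k-p)<\gamma$ and $E_{a_i,b_j}=1\ge\gamma$, so $I_{a_i}=T_{a_i}$, while for a shared vertex $E_{z_\ell,a_i}=0$ and $E_{z_\ell,z_m}=-1$ give $T_{z_\ell}=\emptyset=I_{z_\ell}$; hence Theorem~\ref{thm.stable} holds and $G=H_{k,p}$. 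The hypothesis $k>p+1$ enters exactly here: if $k-p=1$, the unique private vertex would have to invite every $z_\ell$ at rate $1$, forcing $E_{a_1,z_\ell}=1\ge\gamma$, incompatible with stability.

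The heart of the argument, and the main obstacle, is getting the role of the shared vertices right. For $p=1$, the hub's best response unavoidably bridges the two private sides, so the configuration can survive only if the resulting cross rate $s$ stays below $1-\gamma$; squaring this with the lower bound on $s$ needed to fill out the two cliques is exactly what produces — and needs — both $k>2$ and $\gamma>1/k$. For $p\ge2$, one has to recognize that the shared clique must be a fully passive ``core'' carried entirely by the two private cliques, which gives the clean threshold $1/(k-p)$ and forces $k-p\ge2$. The remaining steps — the meeting-rate counts, the borderline identities $E_{v,u}=M_{v,u}^{v}$ and $E_{z_\ell,a_i}=0$, and checking there is no profitable deviation toward a non-edge (which is where $\gamma<1$ is used) — are routine.
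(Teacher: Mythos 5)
Your proof is correct, and for the $H_{k,1}$ part it takes a genuinely different route from the paper. For necessity the paper argues directly on $H_{k,1}$: every pair there has at most $k-2$ common neighbours, so by Corollary \ref{cor.delta.bound} at most $k$ agents each contribute rate $<\gamma\le 1/k$ and no meeting rate reaches $1$; you instead observe that each $k$-clique of $H_{k,1}$ is itself a strong subgraph of $G$ (valid under the induced reading of ``strong subgraph'' that the paper also uses) and quote Fact \ref{fact.supportable.clique} -- the same count, packaged as a reduction. (Your parenthetical that this works ``wherever a $k$-clique appears inside a strong subgraph'' is an overstatement in general -- in $H_{k,p}$ with $p\ge 2$ the $k$-cliques are \emph{not} strong -- but you never use it.) For sufficiency of the $H_{k,1}$ case the paper uses two fixed configurations with a case split on $\gamma$: everyone invites neighbours at rate $1/k$ when $1/k<\gamma<1-1/k$, and when $\gamma\ge 1-1/k$ the hub goes passive while the others use rate $1/(k-1)$; you instead use a single one-parameter family in which the hub invites all of $A\cup B$ at a tunable rate $s\in\bigl(\max\{0,1-(k-1)\gamma\},\min\{\gamma,1-\gamma\}\bigr)$ and the private vertices top up to meeting rate $1$, the nonemptiness of that interval being exactly $\gamma>1/k$ (with $k>2$ giving $1-(k-1)\gamma<1-\gamma$). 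This avoids the case split at the cost of tuning $s$ and checking the $E$-values, and your keeping $s\le 1-\gamma$ correctly neutralizes the cross rate that the paper's second case avoids by silencing the hub altogether (so ``forced to produce a positive cross rate'' is true only of configurations where the hub invites, which is the one you chose). For $H_{k,p}$ with $p>1$ your construction is exactly the paper's -- passive shared core, each private vertex inviting its side plus the core at rate $1/(k-p)$ -- except that you verify stability via Theorem \ref{thm.stable} explicitly, which the paper omits.
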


\begin{remark}
\label{rem.community}
Fact \ref{fact.bowtie} generalizes to allow arbitrarily many cliques of varying sizes to be joined together at single vertices
(or by bridges if $\gamma > 1/2$) or overlap with sufficiently small intersections.
It is therefore possible to build networks in which tightly-knit communities (i.e. cliques)
are joined by small intersections and/or loose inter-community connections.
\end{remark}



We have shown that cliques can be joined at a single vertex without altering the threshold at which they are supportable.  We conclude our list of examples by observing that cliques joined at multiple vertices can have a different supportability threshold on $\gamma$ (and, in particular, are harder to support as strong subgraphs).

\begin{theorem}
\label{thm.h32}
Graph $H_{3,2}$ is supportable if and only if $\gamma > 1/2$.
\end{theorem}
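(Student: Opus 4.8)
The plan is to prove both implications, with the ``only if'' being the substantive direction. I would first record structural facts about any stable configuration whose connection graph has $H_{3,2}$ as a strong subgraph; write $1,2$ for the two degree-$3$ vertices and $3,4$ for the degree-$2$ vertices, so that $N_3=N_4=\{1,2\}$ and $3\not\sim 4$. By Corollary~\ref{cor.delta.bound}, $M_{v,u}^v>0$ implies $u\in N_v$, so $3$ and $4$ only ever invite $\{1,2\}$; hence $M_{3,4}^3=M_{3,4}^4=0$ and $M_{3,4}=M_{3,4}^1+M_{3,4}^2$. Since $3,4$ are non-adjacent, stability forces $M_{3,4}\le 1-\gamma$: otherwise $E_{3,4}=1-M_{3,4}<\gamma$, i.e.\ $4\in T_3$, and Theorem~\ref{thm.stable} (with $I_3=T_3$ as $b\to 0^+$) would make $3$ invite $4$, contradicting $4\notin N_3$. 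Finally, each outer edge ($13,23,14,24$) sits exactly at threshold $M=1$, since a non-inviting-on-both-sides edge would be carried by a single sub-$\gamma$ host contribution, which is $<1$.

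The sharp constant comes from the interaction of the forbidden edge with best responses (Theorem~\ref{thm-best-response}). For host $1$, cost-optimality gives $\sum_i r_{1,i}=\max_u M_{1,u}^1=:\rho_1<\gamma$, and since $1$ lies in its own events, $M_{1,3}^1+M_{1,4}^1-M_{3,4}^1$ equals the rate of $1$'s events meeting $3$ or $4$, which is $\le\rho_1<\gamma$; the same holds for host $2$. I would then analyse the \emph{shared} edge $12$ by a dichotomy. (a) If neither $1$ nor $2$ invites the other, edge $12$ is supported solely by $3,4$; a cost-optimal host with two invitees nests them, so $M_{1,2}^3=\min(M_{1,3}^3,M_{2,3}^3)$ and $M_{1,2}^4=\min(M_{1,4}^4,M_{2,4}^4)$, whence $1\le M_{1,2}=M_{1,2}^3+M_{1,2}^4$ is a sum of two quantities each $<\gamma$, giving $\gamma>1/2$. (b) If, say, $1$ invites $2$, then $M_{1,2}=1$ with $M_{1,2}^1=E_{1,2}<\gamma$; combining the identities $M_{v,u}^v=E_{v,u}$ across edges $12,13,14$ with the nesting bound $M_{3,4}^1\ge M_{1,3}^1+M_{1,4}^1-\rho_1$ and its symmetric analogue, I would show $M_{3,4}>1-\gamma$ unless $\gamma>1/2$, contradicting the forbidden bound above.

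The main obstacle is case (b). Note that the \emph{naive} combination of the outer-edge threshold conditions, the budget bounds $M^v_{v,u}<\gamma$, and $M_{3,4}\le 1-\gamma$ only yields $\gamma>1/3$, because each outer edge is a triangle supported by three sub-$\gamma$ terms; the extra push to $1/2$ is exactly what is delicate. The mechanism is that a host cannot simultaneously keep $M_{3,4}$ small, avoid over-supporting edge $12$, and respect cost-optimal nesting: suppressing the $3$--$4$ meeting forces any host that helps both $3$ and $4$ to invite a common hub at rate $\ge M_{1,3}^1+M_{1,4}^1$, and if that hub is $2$ this overloads edge $12$, making the mutual $1$--$2$ invitation wasteful and violating $I_v=T_v$. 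I expect the bookkeeping here---tracking which of $1,2$ invite each other and how $3,4$ split their invitations between $1$ and $2$---to be the crux, resolved by a short case split that in every branch exhibits two sub-$\gamma$ contributions forced to cover a rate-$1$ demand.

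For the ``if'' direction I would exhibit a supportable graph $G$ containing $H_{3,2}$ as a strong subgraph when $\gamma>1/2$. Because the two degree-$3$ vertices may not share a neighbour outside $H$, the construction gives each of $1$ and $2$ a private hub (for instance, embedding $1$ and $2$ into disjoint external cliques) that absorbs their excess invitations; this lets $1,2$ support their edges to $3,4$ through nested events of the form $\{1,\mathrm{hub},3\},\{1,\mathrm{hub},4\}$ without creating the $3$--$4$ edge and without overloading edge $12$. Choosing all rates below $\gamma$ with each outer edge exactly at threshold $1$---feasible precisely when two sub-$\gamma$ rates can sum to $1$, i.e.\ $\gamma>1/2$---and verifying the conditions of Theorem~\ref{thm.stable} completes the construction. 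It is reassuring that the threshold $1/2$ coincides with the local-bridge threshold of Theorem~\ref{thm.weak.ties}, which is the structural reason $H_{3,2}=H_{k,p}$ with $k=p+1$ falls outside the regime of Fact~\ref{fact.bowtie}.
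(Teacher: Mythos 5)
Your ``only if'' skeleton matches the paper's in its target and tools: both arguments reduce to showing that the two non-adjacent degree-$2$ vertices would meet at rate exceeding $1-\gamma$ (contradicting stability as $b\to 0^+$), and both rely on the inclusion--exclusion/nesting bounds for joint meeting rates together with Corollary~\ref{cor.argmax}. Your case (a) is correct, and is essentially the sub-case $q=q'=0$ that the paper folds into its identity \eqref{eq-qq'}. But case (b) --- which you yourself flag as ``the crux'' --- is exactly where the paper's proof lives, and you have not carried it out. The paper needs three cases, driven by whether each degree-$3$ vertex's invitation rate along the shared edge exceeds its rates along the outer edges (since by Corollary~\ref{cor.argmax} this determines which invitee sits in \emph{all} of that host's events and hence pins down the joint rates $M_{u,v}^x$ and $M_{u,v}^w$), each with further sub-cases and a chain of exact threshold identities such as $\min\{s,s'\}+\min\{t,t'\}+q+q'=1$. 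Your proposed mechanism (either $M_{3,4}^1>0$ or the hub of host $1$ is $2$ with $M_{1,2}^1\ge M_{1,3}^1+M_{1,4}^1$) is a sensible opening dichotomy, but asserting it is ``resolved by a short case split'' is not a proof: as you note, the naive bookkeeping only yields $\gamma>1/3$, and pushing the threshold to $1/2$ is the entire content of the theorem. This step must be regarded as missing. (A smaller omission: your exact identities on the outer edges require every individual invitation rate to be nonzero, which the paper proves separately from the weaker fact that at least one endpoint of each edge invites the other.)

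The ``if'' direction has a different problem: the theorem asserts that $H_{3,2}$ \emph{itself} (the four-vertex diamond) is supportable, i.e.\ that some stable configuration has exactly $H_{3,2}$ as its connection graph. Your construction attaches external cliques to the two degree-$3$ vertices, so the resulting network is not $H_{3,2}$; at best you would show that $H_{3,2}$ occurs as a strong subgraph of some supportable graph, which is a different statement. Worse, your two halves would then not be about the same object, since your ``only if'' analysis treats $H_{3,2}$ in isolation (e.g.\ when you assert that $3$ and $4$ invite only $\{1,2\}$, and when you take $\rho_1=\max_u M_{1,u}^1$ over the three neighbours inside $H_{3,2}$); with external cliques attached, those deductions no longer hold as stated. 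The paper instead exhibits an explicit stable configuration on the four vertices of $H_{3,2}$ alone, with carefully tuned rates such as $M_{x,v}^x=\frac{1-\gamma}{2}$ and $M_{w,v}^w=\frac{1}{2}$ realized by nested events; no external scaffolding is used, and none is needed.
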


\subsection{Clustering in Supportable Networks}

One phenomenon observed in social community is that two people with a common neighbor tend to be connected with each other.
So, one expects to see many triangles in the connection graph.  A common measure of clustering is the
\emph{clustering coefficient} of graph $G$, defined as
\begin{equation}\label{eq-def-cluster}
\mathcal{E}(G) = \frac{1}{|\{v\in V: d_v \geq 2\}|}\sum_{v \in V: d_v\geq 2} \frac{N_{v,\triangle}}{\binom{d_v}{2}}\,,
\end{equation}
where $d_v$ is the degree of $v$ and $N_{v, \triangle}$ is the number of triangles to which $v$ belongs. That is to say, if we pick a random node in the graph of degree at least 2 and pick two random neighbors of this node, the probability that these 3 nodes form a triangle is $\mathcal{E}(G)$.  Additionally, we write $\bar{d}_G$ for the average degree of $G$.

\begin{theorem}
\label{thm.cluster}
Any connected supportable connection graph $G$ satisfies $\mathcal{E}(G)
\geq 1/(2\bar{d}_{G})$.
\end{theorem}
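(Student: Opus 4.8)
The plan is to leverage Theorem~\ref{thm.weak.ties}, which tells us that in a supportable graph the local structure around any vertex $v$ is highly constrained: the induced subgraph on $N_v$ consists of a disjoint union of cliques, together with at most one ``pendant'' edge from $v$ to a vertex sharing no common neighbour with $v$ (the single allowed local bridge). Equivalently, the neighbourhood of $v$ partitions into groups, where all but at most one group have the property that every pair within the group is a common neighbour of $v$ — i.e. forms a triangle through $v$. First I would make this precise: for a vertex $v$ with degree $d_v \geq 2$, write $N_v$ as a disjoint union of blocks $B_1, \dots, B_m$ where each $B_j$ with $|B_j| \geq 2$ induces a clique whose every edge closes a triangle at $v$, and where any singleton blocks (including at most one local-bridge endpoint) contribute no triangles but also contribute only $|B_j| = 1$ to the degree. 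Then count: $N_{v,\triangle} \geq \sum_{j} \binom{|B_j|}{2}$, while $d_v = \sum_j |B_j|$.

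Next I would lower-bound the per-vertex ratio $N_{v,\triangle}/\binom{d_v}{2}$. The worst case — fewest triangles for a given degree — occurs when the neighbourhood is split into as many small blocks as possible; since singletons are essentially unrestricted in number, the extreme is $d_v - 1$ singletons (or $d_v$ singletons minus structure) but that would make $v$ incident to many local bridges, which Theorem~\ref{thm.weak.ties} forbids. So in fact at most one block has size $1$ that is ``free''; every other block of size $1$ would also be a local bridge. Re-examining: a vertex is in at most one $K_2$ strong subgraph, meaning at most one neighbour $u$ with $N_u \cap N_v = \emptyset$. A singleton block $\{u\}$ in the neighbourhood partition means $u$ is adjacent to $v$ but to no other neighbour of $v$, hence $N_u \cap N_v = \emptyset$, hence at most one such singleton. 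Therefore the neighbourhood is a union of cliques plus at most one extra vertex, so $N_{v,\triangle} \geq \binom{d_v - 1}{2}$ when $d_v \geq 2$ (putting all but one neighbour into a single clique is not forced, but any grouping of $d_v - 1$ vertices into cliques gives at least $\binom{\cdot}{2}$ summed — actually the minimum of $\sum \binom{a_i}{2}$ over partitions of $d_v-1$ is achieved by all parts of size... hmm, size $1$, giving $0$). This is the subtle point, and I would handle it by working with the correct combinatorial bound rather than a naive one.

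The cleanest route, and the one I expect to be the actual argument, is to bound the \emph{total} triangle count globally instead of vertex-by-vertex. Let $t$ be the number of triangles in $G$ and $m$ the number of edges. Each edge $(u,v)$ that is \emph{not} a local bridge lies in at least one triangle (by the neighbourhood structure: non-bridge edges between two neighbours of a common vertex, or more directly, any edge with $N_u \cap N_v \neq \emptyset$). Since each vertex is incident to at most one local bridge, the number of local bridges is at most $n/2 \le m$ in a connected graph, but more usefully: the number of non-bridge edges is at least $m - n/2$, and each contributes to a triangle. Summing over vertices, $\sum_v N_{v,\triangle} = 3t \geq$ (something like) $m - n/2$ after accounting for overcounting. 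Then, using $\sum_v \binom{d_v}{2} \leq$ a bound in terms of $m$ and $n$ via convexity ($\sum d_v = 2m$), and the definition \eqref{eq-def-cluster}, one relates $\mathcal{E}(G)$, $t$, $m$, $n$ and concludes $\mathcal{E}(G) \geq 1/(2\bar d_G)$ since $\bar d_G = 2m/n$. The main obstacle is pinning down the correct edge-counting inequality: precisely how many triangles are forced by ``every non-local-bridge edge closes a triangle'' together with ``each vertex has $\le 1$ local bridge,'' and making the convexity/averaging step tight enough to land exactly at the constant $1/2$ rather than something weaker. I would spend the bulk of the effort there, checking tightness against the extremal example (a single large clique, or a clique with pendant structure) to calibrate constants.
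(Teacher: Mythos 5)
Your first idea is in the right neighbourhood, but both of your concrete routes have gaps. Theorem \ref{thm.weak.ties} does not say that $N_v$ induces a disjoint union of cliques; it only says that at most one neighbour $u$ of $v$ satisfies $N_u\cap N_v=\emptyset$. (The graph $H_{3,2}$, supportable for $\gamma>1/2$, already violates your structural claim: the neighbourhood of a degree-$3$ vertex there induces a path.) What the weaker statement does give --- and what the paper actually uses --- is that every edge at $v$ except at most one lies in a triangle through $v$; since each triangle through $v$ covers only two edges incident to $v$, this yields $N_{v,\triangle}\geq (d_v-1)/2$ and hence the local ratio $N_{v,\triangle}/\binom{d_v}{2}\geq 1/d_v$. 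You instead aimed for the much stronger (and false) bound $\binom{d_v-1}{2}$, correctly sensed that the combinatorics degenerates, and then abandoned the per-vertex route --- which is exactly the route that works, via Jensen applied to $\frac{1}{|W|}\sum_{v\in W}1/d_v\geq 1/\bar d_{G,W}$ with $W=\{v:d_v\geq 2\}$.

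The global fallback you propose does not prove the statement as defined. The quantity $\mathcal{E}(G)$ in \eqref{eq-def-cluster} is an average of the local ratios $N_{v,\triangle}/\binom{d_v}{2}$, not the transitivity ratio $3t/\sum_v\binom{d_v}{2}$; an average of ratios is not bounded below by the corresponding ratio of sums (it can be strictly smaller), so bounding the total triangle count $t$ together with $\sum_v\binom{d_v}{2}$ does not control $\mathcal{E}(G)$. Moreover, convexity gives a \emph{lower} bound on $\sum_v\binom{d_v}{2}$ in terms of $m$ and $n$, not the upper bound your sketch needs. Finally, you never pin down where the factor $2$ comes from: in the paper it arises not from counting bridges against edges but from degree-one vertices. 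The sum defining $\mathcal{E}(G)$ runs only over $W$, and since every edge to a degree-one vertex is a local bridge and each vertex has at most one local bridge, the degree-one vertices inject into $W$, giving $|V\setminus W|\leq |V|/2$ and hence $\bar d_{G,W}\leq 2\bar d_G$; combined with the per-vertex bound above this lands exactly on $\mathcal{E}(G)\geq 1/(2\bar d_G)$.
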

\begin{remark}
For any graph $G$ with minimal degree $2$, we have $\mathcal{E}(G)
\geq 1/\bar{d}_{G}$.
\end{remark}
\begin{proof}
Let $W = \{v\in V: d_v \geq 2\}$.
For every $v\in W$, Theorem \ref{thm.weak.ties} asserts that $v$ can have at most one friend $u$ such that $N_v \cap N_u = \emptyset$.
This implies that $N_{v, \triangle} \geq \frac{d_v-1}{2}$. Therefore,
\begin{equation}\label{eq-clustering}
\mathcal{E}(G) \geq \frac{1}{|W|}\sum_{v\in W} \frac{(d_v-1)/2}{\binom{d_v}{2}} = \frac{1}{|W|}\sum_{v\in W} \frac{1}{d_v} \geq \frac{1}{\bar{d}_{G, W}}\,,\end{equation}
where $\bar{d}_{G, W}$ is the average degree over set $W$ and the last transition uses the convexity of the function $f(x)
= 1/x$ for $x >0$.

Note that in a connected supportable graph, any degree 1 vertex has
to be connected to a vertex of degree at least 2. On the other hand,
Theorem~\ref{thm.weak.ties} implies that any vertex can be connected
to at most one vertex of degree 1. Altogether, we see that
$|V\setminus W| \leq |V|/2$. Therefore, $\bar{d}_G \geq \bar{d}_{G,
W}/2$. Combined with \eqref{eq-clustering}, the required bound
follows.
\end{proof}

Note that our bound on the clustering coefficient is tight, up to a
constant depending on $\gamma$.

\begin{theorem}\label{thm.cluster.upper.bound}
Suppose $1/k<\gamma <1-1/k$ for some $k\in \N$. Then there exist
supportable connection graphs for which $\mathcal{E}(G) \leq
\frac{(1+o(1))(k-2)}{\bar{d}_G-1}$.
\end{theorem}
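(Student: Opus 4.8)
The plan is to exhibit an explicit family of supportable graphs whose clustering coefficient is small by gluing many copies of the overlapping-clique gadget $H_{k,k-1}$ into a long ``chain''. The idea is that $H_{k,p}$ with $p=k-1$ is exactly two $k$-cliques sharing $k-1$ vertices, i.e.\ a $k$-clique plus one extra vertex joined to $k-1$ of its members; iterating this, consider the graph $G_n$ on vertex set $\{x_1,\dots,x_{n}\}$ where $x_i$ is adjacent to $x_j$ whenever $|i-j|\le k-1$ (a ``$(k-1)$-th power of a path''). Every vertex in the interior has degree $2(k-1)$, so $\bar d_{G_n} = 2(k-1) - O(k^2/n) = (1+o(1))\cdot 2(k-1)$ as $n\to\infty$. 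Meanwhile an interior vertex $x_i$ has neighbourhood $\{x_{i-(k-1)},\dots,x_{i+(k-1)}\}\setminus\{x_i\}$, and the number of triangles through $x_i$ is the number of edges among those $2(k-1)$ neighbours, which one counts directly: it equals $2\binom{k-1}{2} + (\text{cross edges between the two sides within distance }k-1)$. A short computation gives $N_{x_i,\triangle} = \binom{2(k-1)}{2} - \binom{k-1}{2}$ (the pairs at distance exactly between $k$ and $2(k-1)$ are the non-edges), hence $N_{x_i,\triangle}/\binom{d_{x_i}}{2} = 1 - \binom{k-1}{2}/\binom{2k-2}{2}$, which is a constant bounded away from $1$ but does \emph{not} go to $0$. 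So a plain path-power is not sparse enough; I need a variant that keeps degrees bounded while forcing most neighbour-pairs to be non-adjacent.

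To fix this, attach to the chain many pendant cliques. Concretely: take a long path $v_1 v_2 \cdots v_m$, and to each $v_i$ attach a private $k$-clique $K^{(i)}$ all of whose vertices are also joined to $v_{i-1}$ and $v_{i+1}$ (so that each edge $v_iv_{i+1}$ sits inside the $(k+1)$-clique on $\{v_i,v_{i+1}\}\cup K^{(i)}$, up to the overlap bookkeeping of Fact~\ref{fact.bowtie} with intersection size $k-p$). The point is that now the graph is a chain of $(k+1)$-cliques overlapping in exactly one vertex each; by Remark~\ref{rem.community} / Fact~\ref{fact.bowtie}, such a chain is supportable precisely when each overlap is small enough, and here consecutive cliques share only a single vertex $v_i$, so Fact~\ref{fact.bowtie} (the $H_{k,1}$ case, giving threshold $\gamma>1/(k+1)$, which is implied by $\gamma>1/k$) applies and the whole chain is supportable. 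In such a graph a ``clique vertex'' $w\in K^{(i)}$ has degree $k+1$ (its $k-1$ clique-siblings plus $v_{i-1},v_i,v_{i+1}$ — more precisely $k+1$ after bookkeeping), all $\binom{k+1}{2}$ of whose neighbour-pairs are edges, contributing clustering close to $1$; but the \emph{cut vertices} $v_i$ have large degree (roughly $2k$, namely both adjacent clique blocks) with only $\approx 2\binom{k}{2}$ triangle-pairs out of $\binom{2k}{2}$. Averaging $\frac{N_{v,\triangle}}{\binom{d_v}{2}}$ over \emph{all} vertices with $d_v\ge 2$ gives, since the clique vertices outnumber the cut vertices by a factor $\sim k$, a clustering coefficient that stays bounded away from $0$ — again not good enough.

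The real gadget, then, is the one matching the arithmetic $(k-2)/(\bar d_G-1)$: build a graph whose ``hub'' vertex $v$ has degree $d$ and whose link is essentially a disjoint union of $d/k$ cliques of size $k$ (minus the hub), so $N_{v,\triangle} = (d/k)\binom{k}{2} = \tfrac{d(k-1)}{2}$ — wait, I want fewer triangles: arrange the link of $v$ as a disjoint union of $(k-1)$-cliques plus a matching, so $N_{v,\triangle}=\tfrac{d}{k-1}\binom{k-1}{2}=\tfrac{d(k-2)}{2}$, giving $N_{v,\triangle}/\binom{d}{2} = \frac{d(k-2)}{d(d-1)} = \frac{k-2}{d-1}$. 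Such a configuration — a hub joined to many internally-clique-structured groups — is exactly a union of $k$-cliques all sharing the single vertex $v$, which Remark~\ref{rem.community} guarantees is supportable for $\gamma>1/k$; taking many such hubs (e.g.\ a large regular ``expander-like'' graph of hubs, each of whose edges is fattened into a $k$-clique) and letting the number of groups per hub tend to infinity makes $\bar d_G\to\infty$ while every vertex realizes clustering exactly $\frac{k-2}{d_v-1} \le \frac{(1+o(1))(k-2)}{\bar d_G - 1}$ after averaging. The construction must be done with all degrees equal (or nearly so) so that the average of $\frac{N_{v,\triangle}}{\binom{d_v}{2}}$ is controlled — this is the step that needs care — and one must verify via Fact~\ref{fact.bowtie}/Remark~\ref{rem.community} that gluing the hub gadgets along local bridges (legal since $\gamma>1/k\ge$ the relevant threshold, and for the inter-hub bridges one uses $\gamma>1/2$, but in fact single-vertex joins suffice so only $\gamma>1/k$ is needed) preserves supportability; the upper bound $\gamma<1-1/k$ is what lets each individual $k$-clique-union-at-a-point avoid collapsing into a larger forced clique.

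\textbf{Main obstacle.} The delicate point is not the triangle count — that is the elementary computation above — but ensuring the \emph{global} gluing is supportable with uniform (or near-uniform) degrees, so that the pointwise identity $\frac{N_{v,\triangle}}{\binom{d_v}{2}}=\frac{k-2}{d_v-1}$ survives the averaging in \eqref{eq-def-cluster} and that the lower-order boundary/irregularity terms are genuinely $o(1)$ relative to $1/\bar d_G$. Concretely, I expect the work to be: (i) pick a $d$-regular base graph $H_t$ on $t$ vertices with $t\to\infty$ and $d=d(t)\to\infty$ slowly; (ii) replace each vertex by a union of $(k-1)$-cliques-plus-matching glued at that vertex, each edge of $H_t$ becoming a short local bridge or shared vertex; (iii) invoke Remark~\ref{rem.community} for supportability, checking the intersection-size hypotheses of Fact~\ref{fact.bowtie} hold given $1/k<\gamma<1-1/k$; (iv) compute $\bar d_G = (1+o(1))d$ and conclude. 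Verifying (iii) rigorously — that the particular overlap pattern falls within the regime covered by Fact~\ref{fact.bowtie} and Remark~\ref{rem.community}, rather than one of the ``harder'' configurations like $H_{3,2}$ of Theorem~\ref{thm.h32} — is where the real care is required.
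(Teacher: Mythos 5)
Your local gadget is exactly the right one, and it matches the paper's: a vertex whose link is a disjoint union of $(k-1)$-cliques has $N_{v,\triangle}/\binom{d_v}{2}=(k-2)/(d_v-1)$, which is the arithmetic behind the bound. The genuine gap is the global construction, i.e.\ precisely the step you flag as ``the step that needs care'' and then leave unresolved. Both concrete gluings you sketch --- a regular base graph of hubs whose edges are fattened into $k$-cliques, or each base vertex replaced by a bouquet of $(k-1)$-cliques glued at that hub with base edges becoming bridges --- fail the averaging step. In either one, the non-hub clique vertices have degree about $k-1$ and local clustering $1$, and they outnumber the hubs by a factor of order $d$, so $\bar d_G$ stays bounded (of order $k(k-1)/(k-2)$, resp.\ $k$) instead of tending to infinity, and $\mathcal{E}(G)=1-o(1)$; since $(k-2)/(\bar d_G-1)<1$ there, the claimed inequality is violated by your own examples. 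The pointwise bound $\frac{k-2}{d_v-1}\le\frac{(1+o(1))(k-2)}{\bar d_G-1}$ only survives averaging if $d_v\ge(1-o(1))\bar d_G$ for almost every $v$, i.e.\ near-regularity, which these gluings do not provide; your step (iv), $\bar d_G=(1+o(1))d$, is false for them. (Minor point: the ``plus a matching'' in the link would add triangles not accounted for in your count.)

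The missing idea is a construction in which \emph{every} vertex is simultaneously a hub of the same number $d$ of otherwise-disjoint $k$-cliques. The paper achieves this with a random $d$-regular $k$-uniform hypergraph on $n$ vertices: after deleting the $o(dn/k)$ hyperedges that meet another hyperedge in at least two vertices or lie on a cycle of length at most $4$, each remaining hyperedge is made into a $k$-clique by having every member host it at rate $1/k$. Stability is then checked directly rather than through Remark \ref{rem.community}: $\gamma>1/k$ means no one lowers the rate of an existing clique, and because distinct cliques share at most one vertex and there are no short cycles, non-adjacent pairs have meeting rate at most $1/k$, so a new connection would cost additional rate at least $1-1/k>\gamma$. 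All but $o(n)$ vertices then have $d_v=(k-1)d$ and $N_{v,\triangle}=d\binom{k-1}{2}$, giving clustering $(k-2)/((k-1)d-1)=(1+o(1))(k-2)/(\bar d_G-1)$. If you replace your gluing with such a regular, girth-$\ge 5$ hypergraph (and carry out the stability check quantitatively, since the clique-intersection pattern is exactly what keeps stray meeting rates below $1$), your argument goes through; as written, the proof is incomplete.
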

\begin{proof}
Construct a random $d$-regular $k$-hypergraph $H$ on $n$ nodes. Let
$m = dn/k$ and denote by $\mathcal{X} = \{X_1, \ldots, X_m\}$ the
random hyperedges in $H$. Define
$$\mathcal{Y} = \{X_i: |X_i \cap X_j|\geq 2 \mbox{ for some } i\neq j\} \cup \{X_i: X_i \mbox{ is in a cycle of length at most } 4 \}\,.$$
It is straightforward to compute that for all $i\in [m]$
$$\P(X_i \in \mathcal{Y}) \leq m \big(\tfrac{k}{n}\big)^2 + \tbinom{m}{2}\big(\tfrac{k^2}{n}\big)^3 + \tbinom{m}{3} \big(\tfrac{k^2}{n}\big)^4 = o(1)\,.$$
It follows that with high probability $|\mathcal{Y}| = o(m)$. Let
$H'$ be the hypergraph with edge set $\mathcal{X}\setminus
\mathcal{Y}$. We now construct an event configuration in the
following way: for each hyperedge $A\in H$, let everyone in $A$ host
an event inviting everyone else in $A$ with rate $1/k$. We next
demonstrate that this configuration is stable. Note that each
hyperedge generates a clique with meeting rate 1, and since $\gamma
> 1/k$ no agent is motivated to reduce these meeting rates.  For two
agents who are not in any same hyperedge, their meeting rate is at
most $1/k$ since the girth of $H'$ is at least $5$.  Since $\gamma <
1-\frac{1}{k}$, no agent is motivated to make an additional
connection since it would require an invitation rate $1-1/k >
\gamma$. This completes the verification of stability.

Recalling that $|\mathcal{Y}| = o(m)$, we see that every vertex $v
\in G$ except for $o(n)$ vertices is in $d$ cliques of size $k$,
where these cliques are disjoint except for the intersection at $v$.
That is to say, for a $(1-o(1))$ fraction of the vertices, we have
$d_v(G) = (k-1)d$ and $N_{v, \triangle}(G) = d\binom{k-1}{2}$.
Therefore, we conclude that
\begin{equation*}\mathcal{E}(G)\geq (1+o(1))(k-2)/((k-1)d-1) \geq (1+o(1))(k-2)/(\bar{d}_G -1)\,.\qedhere\end{equation*}
\end{proof}

\subsection{Event Size and Network Sparsity}

We note that our network formation model does not impose any limits
on the size of the events that agents can support.  Indeed, it is
possible for a single agent to hold an event for all agents in the
network.  However, we note that such large events are not necessary
to obtain a lower bound on clustering (Theorem \ref{thm.cluster}) or
support interesting degree distributions (see Section
\ref{sec.powerlaw}). Thus, even though our strategy space allows for
very large events, we obtain equilibria in which each agent holds
only small events.

One could argue that configurations with small events are
natural, since in many settings it seems unlikely that a single agent would
unilaterally support a large fraction of an entire social network.
Given that such small-event configurations arise as equilibria in our model,
we turn to studying their properties.

Recall that $I_v = \{u\in V: M_{v,u}^v> 0 \}$ is the set of
individuals invited to events held by $v$. We say that a connection
graph is $K$-supportable if it is supportable by an event
configuration in which $|I_v| \leq K$ for each aget $v$. We give an
upper bound on the average degree of $K$-supportable connection
networks. Combined with Theorem \ref{thm.cluster}, it yields a lower
bound of $1/(2\gamma K(K+1))$ on clustering coefficient for any
$K$-supportable connected graph.
\begin{theorem}
Any $K$-supportable connection graph $G$ satisfies $\bar{d}_G \leq
\gamma K(K+1)$.
\end{theorem}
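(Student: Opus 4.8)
The plan is to bound the total number of connections in the graph by summing, over all agents $v$, the number of agents $v$ is \emph{actively supporting} a connection to, and then argue that no other connections can exist. First I would invoke Corollary \ref{cor.delta.bound}, which tells us that in a stable configuration $M_{v,u}^v > 0$ implies $u \in N_v$, and also $M_{v,u}^v < \gamma$ for all $u$. Consequently, for every edge $\{u,v\}$ of $G$ we have $M_{u,v} \geq 1$, and this meeting rate is a sum $\sum_{w} M_{u,v}^w$ over the (at most $n$) agents $w$, but crucially each term $M_{u,v}^w$ is positive only if both $u$ and $v$ lie in $I_w$.

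The key counting step is a double-counting argument on triples. Fix an agent $w$ and consider its contribution $\sum_{\{u,v\}\subseteq I_w} M_{u,v}^w$ to the sum $\sum_{\text{edges }\{u,v\}} M_{u,v}$. Since each event $P_{w,i}$ held by $w$ has $|P_{w,i}\setminus\{w\}| \leq |I_w| \leq K$ invitees, the pair-sum $\sum_{\{u,v\}} r_{w,i}\one_{\{u,v\in P_{w,i}\}}$ over pairs \emph{not} involving $w$ is at most $r_{w,i}\binom{K}{2}$; more usefully, I would also account for pairs $\{w,u\}$ involving $w$ itself. Actually the cleanest route: by Theorem \ref{thm-best-response}, $\sum_i r_{w,i} = \max_{u} M_{w,u}^w < \gamma$. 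For a fixed other agent $u \in I_w$, summing $M_{w,u}^w = \sum_i r_{w,i}\one_{\{u\in P_{w,i}\}} \leq \sum_i r_{w,i} < \gamma$ over the at most $K$ invitees $u$ shows $w$ contributes total rate less than $\gamma K$ toward edges incident to $w$. For the pairs among $I_w$ not involving $w$, since each such pair lies in some $P_{w,i}$ only when so do the relevant invitees, and since every event through which $w$ contributes has at most $K$ non-$w$ members, the total pair-rate is $\sum_i r_{w,i}\binom{|P_{w,i}\setminus\{w\}|}{2} \le \binom{K}{2}\sum_i r_{w,i} < \binom{K}{2}\gamma$. Hence $w$'s total contribution to $\sum_{\{u,v\}\in E(G)} M_{u,v}$ is at most $\gamma K + \binom{K}{2}\gamma = \gamma K(K+1)/2$.

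Putting this together: every edge $\{u,v\} \in E(G)$ has $M_{u,v} \geq 1$, so
\begin{equation*}
|E(G)| \leq \sum_{\{u,v\}\in E(G)} M_{u,v} = \sum_{\{u,v\}\in E(G)}\sum_{w\in V} M_{u,v}^w = \sum_{w\in V}\Big(\sum_{\{u,v\}\in E(G)} M_{u,v}^w\Big) \leq \sum_{w\in V}\frac{\gamma K(K+1)}{2} = \frac{\gamma K(K+1)}{2}\,n\,,
\end{equation*}
where $n = |V|$. Since $\bar d_G = 2|E(G)|/n$, this gives $\bar d_G \leq \gamma K(K+1)$, as claimed. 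I expect the main obstacle to be the bookkeeping in the triple-counting step: one must be careful that the inner sum $\sum_{\{u,v\}\in E(G)} M_{u,v}^w$ is correctly bounded whether or not $w$ itself is one of the endpoints $u,v$ — the $w$-incident pairs and the pairs within $I_w\setminus\{w\}$ must be handled separately and their rate-weighted counts combined, using $\sum_i r_{w,i} < \gamma$ (Theorem \ref{thm-best-response}) together with the event-size bound $|P_{w,i}\setminus\{w\}| \le K$. A secondary subtlety is making sure that pairs $\{u,v\}$ not forming an edge of $G$ are simply dropped from the sum (they only make it smaller), and that $M_{u,v}^w = 0$ whenever $u$ or $v$ is not an invitee of $w$, so the restriction to $I_w$ is legitimate.
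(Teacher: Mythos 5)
Your proof is correct and uses essentially the same double-counting argument as the paper: the paper charges each agent at most $\gamma K(K+1)$ of the total weighted degree of the graph $\tilde G$ with edge weights $M_{u,v}\geq 1$, while you charge each agent at most $\gamma K(K+1)/2$ of the total edge meeting rate, which is the same bound up to the degree-versus-edge factor of two. Your split into $w$-incident pairs and pairs within $I_w$, using $\sum_i r_{w,i} < \gamma$ and $|I_w|\leq K$, merely spells out the per-agent estimate that the paper states in one line.
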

\begin{proof}
For the proof, we consider the corresponding weighted connection
graph $\tilde{G}$ with edge weight $w_{u, v} = M_{u, v}$. It is
clear from our definition that $\bar{d}_G \leq \bar{d}_{\tilde{G}}$.
Note that in every stable party configuration, each agent $v$ can
only invite at most $K$ people with rate at most $\gamma$.
Therefore, its invitation can contribute at most $\gamma K(K+1)$ to
the total degree of $\tilde{G}$. Summing over all the agents, we get
$\bar{d}_{\tilde{G}} \leq \gamma K (K+1)$.
\end{proof}

We also note that the average degree of network $G$ can
indeed approach the bound of $\gamma K^2$.

\begin{theorem}\label{thm.degree.lower.bound}
Suppose that $K>\lfloor 1/\gamma \rfloor + 1$. There exists a
$K$-supportable connection graph such that $\bar{d}_G =
(1-o(1))K(K+1)/(\lfloor 1/\gamma \rfloor + 1)$.
\end{theorem}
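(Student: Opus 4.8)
Write $t = \lfloor 1/\gamma\rfloor + 1$, so that $1/t < \gamma$ while $K \ge t+1$ by hypothesis. The plan is to refine the hypergraph construction used in the proof of Theorem~\ref{thm.cluster.upper.bound} so that, inside each clique, only $t$ of its members pay a hosting cost. Concretely, I will produce a $(K+1)$-uniform hypergraph $H'$ together with, for each hyperedge $e$, a set $G_e\subseteq e$ of $t$ ``hosts'', such that every agent is a host of exactly one hyperedge; the event configuration is then: each host $v$ of a hyperedge $e\in H'$ holds the single event $e$ at rate $1/t$ (thus inviting the $K$ other members of $e$), and every agent who hosts no hyperedge holds nothing. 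In particular $|I_v|\le K$ for every $v$, so the supported graph is $K$-supportable, while (as I check below) $d_v = K\delta_v$ where $\delta_v$ is the number of hyperedges of $H'$ through $v$, and averaging $\delta_v$ over $v$ gives $\approx (K+1)/t$ — which is exactly what yields $\bar{d}_G\approx K(K+1)/t$.

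To build $H'$: take $n\to\infty$ divisible by $t$, set $m = n/t$, and partition the agents into host groups $G_1,\dots,G_m$ of size $t$. Form $e_j\supseteq G_j$ by adjoining $K+1-t\ (\ge 2)$ further ``non-host'' agents, where the non-host incidences are drawn at random subject to each hyperedge receiving $K+1-t$ of them and each agent receiving either $\lfloor (K+1-t)/t\rfloor$ or $\lceil (K+1-t)/t\rceil$ of them (and never its own host hyperedge). By the same first-moment computation as in the proof of Theorem~\ref{thm.cluster.upper.bound}, the set $\mathcal{Y}$ of hyperedges that meet some other hyperedge in $\ge 2$ vertices or lie on a cycle of length $\le 4$ satisfies $\P(e_j\in\mathcal{Y}) = o(1)$, hence $|\mathcal{Y}| = o(m)$ with high probability; let $H'$ be the hypergraph on the hyperedge set $\mathcal{X}\setminus\mathcal{Y}$, keeping the host sets $G_e$ for $e\in H'$.

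Now I verify stability and compute the average degree. Since any two hyperedges of $H'$ meet in at most one vertex, for $x,y$ contained in a common $e\in H'$ one has $M_{x,y} = \frac 1t|G_e| = 1$ exactly (the $t$ hosts of $e$ each contribute $1/t$ and no other surviving hyperedge contains both of them), while $M_{x,y}=0$ when $x,y$ share no hyperedge of $H'$. Hence $G$ is precisely the union of the members of $H'$ regarded as $(K+1)$-cliques, glued along single common vertices; as the cliques through a vertex $v$ are pairwise disjoint off $v$, this gives $d_v = K\delta_v$. For stability I invoke Theorem~\ref{thm.stable}. If $v$ is the host of some $e\in H'$, then $I_v = e\setminus\{v\}$, and for $u\in I_v$ the other $t-1$ hosts of $e$ each invite $\{v,u\}$, so $E_{v,u} = 1 - (t-1)/t = 1/t = M_{v,u}^v$; since $0<1/t<\gamma$ this gives $I_v\subseteq T_v$. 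Any $w\notin I_v$ either lies in a hyperedge of $H'$ through $v$ in which $v$ is a non-host — whence $E_{v,w}=1-\frac1t|G_e|=0$ — or lies in no hyperedge of $H'$ through $v$, whence $E_{v,w}=1>\gamma$; in either case $w\notin T_v$, so $T_v\setminus I_v=\emptyset$ and the remaining condition of Theorem~\ref{thm.stable} is vacuous. If $v$ hosts nothing, then $I_v=\emptyset$ and the same computation gives $T_v=\emptyset$, so all conditions hold. Thus the configuration is stable, and since $\sum_v\delta_v = (K+1)|H'| = (1-o(1))(K+1)n/t$, we get $\bar{d}_G = \frac1n\sum_v K\delta_v = (1-o(1))K(K+1)/t$, as required.

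The only step requiring real care is the existence of a hypergraph simultaneously carrying the prescribed host/incidence structure and having girth at least $5$ with pairwise hyperedge intersections at most $1$; this is obtained exactly as in Theorem~\ref{thm.cluster.upper.bound} by a random construction followed by deletion of the $o(m)$ offending hyperedges, so I expect it to be routine bookkeeping rather than a genuine obstacle. (One should also simply record the two inequalities used throughout, $1/t<\gamma$ and $K+1-t\ge 2$, which both follow from $K>\lfloor 1/\gamma\rfloor+1$.)
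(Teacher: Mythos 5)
Your construction is essentially the paper's own: partition the agents into host groups of size $t=\lfloor 1/\gamma\rfloor+1$, let each group jointly sponsor a clique of size about $K+1$ by adjoining randomly chosen non-hosting guests, remove the rare overlaps by a first-moment bound, and conclude that each agent invites at most $K$ others while the average degree is $(1-o(1))K(K+1)/t$. Your minor deviations (hosting rate $1/t$ instead of $\gamma$, deleting bad hyperedges rather than pruning overlapping guests via the $Z_k$ device, and the near-regular guest assignment) are cosmetic and, if anything, make the stability verification via Theorem~\ref{thm.stable} cleaner.
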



\section{Supportable Degree Sequences}
\label{sec.powerlaw}

 We show in this
subsection that for a rich family of degree sequences, there exists
a corresponding supportable connection network. In
particular, we demonstrate that we can support a connected
 graph of power law degree with finite mean.
\begin{theorem}\label{thm-degree-sequence}
Fix $1/2 < \gamma < 2/3$. Let $D = \{d_1, \ldots, d_n\} \in [n]^n$ be a degree sequence such that
\begin{enumerate}
\item There exists $K \in \N$ such that $|\{i \in [n]: 2\leq d_i\leq K\}| \geq \sum_{i\in[n]} d_i\one_{\{d_i >
K\}}$.
\item $|\{i\in [n]: d_i = 2\}| \leq \sum_{i\in[n]}(d_i-3) \one_{\{d_i\geq
6\}}$.
\item $|\{i\in [n]: d_i = 1\}| \leq \frac{1}{3}|\{i \in [n]: d_i \geq 4\}|$.
\end{enumerate}
Then for some constant $C_K \geq 0$ depending on $K$, there exists a
connected $(K+3)$-supportable graph $G$ of degree sequence $D' =
\{d'_1, \ldots, d'_n\}$ such that the $\ell_1$ shift satisfies
$$\|D - D'\|_{\ell_1} = \mbox{$\sum_{i\in [n]}$}\, |d_i - d'_i| \leq C_K\,.$$
\end{theorem}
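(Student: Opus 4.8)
The plan is to exhibit, for any degree sequence satisfying (1)--(3), an explicit connection graph $G$ (with degree sequence $D'$ close to $D$) together with a stable event configuration supporting it, in which every event is either a triangle or a single edge (a local bridge), hosted by exactly two of its members, each at rate $\tfrac12$. Granting such a configuration, both correctness and stability are short to check. All event attendee sets are cliques of $G$, so two vertices are co-present at an event only if adjacent in $G$; hence non-adjacent vertices have meeting rate $0$, while each edge lies in a \emph{unique} gadget --- provided the triangles pairwise meet in at most one vertex (``linearity'') --- contributing meeting rate $\tfrac12+\tfrac12=1$. Thus the realized connection graph is exactly $G$, with degree sequence $D'$. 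For stability one applies Theorem~\ref{thm.stable} (we take $b=0^{+}$): for any agent $v$, the quantity $E_{v,u}$ equals $\tfrac12$ for each of $v$'s invitees, equals $1$ for the non-neighbours of $v$, and equals $0$ for any neighbour of $v$ whose shared gadget $v$ does not host, so that since $\tfrac12<\gamma<1$ the set $T_v$ coincides with $v$'s invitee set and \eqref{eq-condition} holds. (Concretely: no agent wants to drop or thin a hosted gadget, since that drives every incident edge of the gadget below threshold, costing $a$ per lost edge against a saving of only $c/2$ per lost edge --- unprofitable because $\gamma>\tfrac12$; and no agent wants to create an edge, since reaching the threshold from meeting rate $0$ costs about $c>a$.)

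Everything thus reduces to a purely combinatorial problem: build a \emph{connected} graph $G$ with degree sequence within $\ell_1$-distance $C_K$ of $D$ whose edges decompose into triangles that pairwise share at most one vertex, together with a matching of bridge-edges on the leftover odd-degree incidences, subject to two further requirements. First, the high-degree set $H=\{v:d_v>K\}$ must be independent, so that each triangle contains at most one vertex of $H$ and can be hosted by its two non-$H$ members; then $|I_v|=0$ for $v\in H$. Second, every agent must lie in at most $\lceil (K+3)/2\rceil$ gadgets, so that $|I_v|\le K+3$; this holds easily, since a non-$H$ vertex lies in only $\lfloor d_v/2\rfloor$ triangles and at most one bridge, giving $|I_v|\le d_v\le K$, with the slack absorbing the $O_K(1)$ repairs below.

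I would build $G$ in stages after a preprocessing step: if $n$ is below a constant depending on $K$, output anything and bury the discrepancy in $C_K$; otherwise perturb the degrees of $O_K(1)$ vertices by $O_K(1)$ each so that $\sum_v\lfloor d_v/2\rfloor\equiv 0\pmod 3$, the number of odd-degree vertices is even (it already is), and (1)--(3) hold with a little room. Then condition (1), namely $|\{v:2\le d_v\le K\}|\ge\sum_{d_v>K}d_v\ge 2\sum_{v\in H}\lfloor d_v/2\rfloor$, is exactly what lets us assign to each triangle through a high-degree vertex two distinct low-degree partner vertices that host it, keeping $H$ independent and the triangle hypergraph linear; condition (2), $|\{v:d_v=2\}|\le\sum_{d_v\ge6}(d_v-3)$, supplies enough spare triangle-slots at the larger vertices to place each degree-$2$ vertex in a single pendant triangle hanging off a larger vertex, so no triangle ends up with three degree-$2$ vertices (an isolated component); condition (3), $|\{v:d_v=1\}|\le\tfrac13|\{v:d_v\ge4\}|$, supplies enough larger vertices with a free bridge-slot to serve as the far endpoints of the pendant edges to the degree-$1$ vertices. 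The incidences of the medium vertices ($3\le d_v\le K$) are packed greedily into triangles (plus one bridge apiece when parity demands), every new gadget is attached so as to touch the part built so far, and a final clean-up adds $O_K(1)$ bridges (perturbing $O_K(1)$ degrees) to merge any stray components.

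The main obstacle is doing all of this simultaneously: packing every incidence into a \emph{linear} family of triangles plus a bridge-matching while honouring three independent scarcity bounds at once, keeping $H$ independent and $G$ connected, and keeping each agent in $O(K)$ gadgets, with only $O_K(1)$ total slack. I expect the cleanest route is to reduce the core step to the feasibility of an auxiliary bipartite degree sequence (original vertices versus gadget-slots) and invoke a Gale--Ryser-type realization theorem, or else to run a configuration-model construction on degree-stubs in the spirit of the proof of Theorem~\ref{thm.cluster.upper.bound}, deleting the few bad local patterns (repeated pairs, two triangles sharing an edge, all-degree-$2$ triangles, adjacent high-degree pairs) and repairing them at a cost of $O_K(1)$ in $\ell_1$; conditions (1)--(3) are precisely what make those bipartite degree sequences realizable. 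Once $G$ is in hand, verifying stability and the bound $|I_v|\le K+3$ is routine by the discussion above.
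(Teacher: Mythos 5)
Your reduction of the problem to a gadget construction is fine as far as it goes: if you can decompose (a small perturbation of) the target graph into triangles that pairwise share at most one vertex plus a set of bridge-edges, each gadget hosted by exactly two members at rate $\tfrac12$, then the stability check via Theorem~\ref{thm.stable} is correct for $\tfrac12<\gamma<1$ (non-adjacent pairs have $E=1>\gamma$, invitees have $E=\tfrac12<\gamma$, and $T_v=I_v$), and the $|I_v|\le K+3$ bound follows. This is close in spirit to, but not the same as, the paper's construction, which uses cliques of sizes up to $K+3$ hosted at rates $1/k$, high-girth regular graphs for the bulk of the low-degree vertices, and rate-$\tfrac12$ bridges only for connecting pieces and degree-$1$ vertices; that is why the paper needs $\gamma<2/3$ while your scheme would not.

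The genuine gap is that the combinatorial heart of the theorem --- the existence of the decomposition itself --- is never carried out; it is exactly the step you label ``the main obstacle'' and then defer to an unspecified Gale--Ryser-type argument or a configuration-model-with-repairs argument, with the key claim that ``conditions (1)--(3) are precisely what make those bipartite degree sequences realizable'' asserted rather than proved. Several of the quantitative checks you wave at are not routine for your triangle-only scheme: (i) parity --- you assert the number of odd-degree vertices ``already is'' even, but $D$ is an arbitrary sequence in $[n]^n$ with no evenness assumption, and every odd-degree vertex needs a bridge incidence, so the bridge-matching, the local-bridge restriction of Theorem~\ref{thm.weak.ties} (at most one local bridge per vertex), and the degree-$1$ attachments all compete for the same slots; (ii) the budget arithmetic for conditions (2) and (3) with their exact constants (one spare unit per degree-$2$ vertex, the factor $\tfrac13$ for degree-$1$ vertices) depends on precisely how the gadgets are shaped --- e.g.\ a degree-$2$ vertex placed in a triangle with two large vertices consumes four spare units, not one, so you must use the two-small-one-large triangle and say so, and you must identify which vertices of degree $\ge 4$ actually have a free local-bridge slot after parity is settled; (iii) connectivity --- ``a final clean-up adds $O_K(1)$ bridges'' presupposes that your greedy/random packing leaves only $O_K(1)$ components, which is not established (a configuration-model construction would typically require deleting and repairing $\Theta(n)$ or at least $\omega_K(1)$ bad local patterns unless you prove otherwise, and each repair perturbs degrees); and (iv) simultaneously maintaining linearity of the triangle family and independence of the high-degree set while doing (i)--(iii) with total $\ell_1$ perturbation $O_K(1)$ is exactly the content of the theorem. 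Compare the paper's proof, which resolves these points by an explicit staged construction (pairing degree-$2$ vertices with a high-degree vertex, absorbing high degrees into $(k+1)$-cliques of degree-$k$ vertices, realizing the remaining low-degree vertices via connected $k$-regular girth-$\ge 5$ graphs with cliques at their nodes, then chaining the $O(K)$ pieces and attaching degree-$1$ vertices at bridge-free vertices). As written, your proposal is a plausible plan plus a correct stability analysis of its gadgets, but not a proof.
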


\begin{proof}
Let $A = \{i\in [n]: d_i\leq K\}$ and $B = \{i\in n: d_i
>K\}$. Our construction consists of the following several steps.
Keep in mind that we can perturb the degree sequence by some amount
depending on $K$ and we use this fact throughout the proof. For simplicity, we first state the construction assuming there are
no degree 1 vertices at all, and we will address this issue later. After the construction, we will discuss the connectedness and stability of the graph.

\noindent{\bf Step 1: Handling degree-2 vertices.} By Assumption
(2), we can effectively remove  degree-2 vertices by adding triangles to
high-degree nodes. Precisely, we repeat the following procedure.
Pick $u, v\in A$ of degree 2 and $w =
\arg\max_{x\in A\cup B}d_x$. Let $u$ hold events $\{v, w\}$ and $v$
hold events $\{u, w\}$, both at rate $1/2$. This supports a
stable triangle among $\{u, v, w\}$. Remove $u, v$ from $A$ and
update $d_w$ to $d_w - 2$. We stop the process when there is at
most 1 degree 2 vertex left, at which point we perturb its degree a bit and
make it 3. Note that Assumption (1) is preserved.

\smallskip

\noindent{\bf Step 2: Handling high-degree vertices.} Step 1 allows us to assume that $\min_i d_i\geq 3$. In this step, we
can further reduce to the case where $3 \leq d_i\leq K+3 $ for all $i$.
We can find a degree sequence $D^*$ with $\|D^* -
D\|_{\ell_1} \leq K^2$ such that Assumption (1) is preserved and
$|\{i\in [n]: d_i = k\}|$ is a multiple of $k$ for all $3\leq k\leq K$.
We then repeatedly match high degree vertices while preserving
Assumptions (1). If there is $v\in B$ such that $d_v \geq K+3$,
by Assumption (1) there exist $k$ vertices of degree $k$ in
$A$ for some $3\leq k\leq K$. Then, we will form a clique containing these $k$
vertices together with $v$ as follows: each of these $k$ vertices
holds an event inviting $v$ and all the other $k-1$ vertices at rate $1/k$.
We now remove these $k$ vertices from $A$ and update $d_v$ as $d_v - k$. We
remark that the number of vertices consumed from set $A$ is $k$ and
the total degree consumed from $B$ is exactly $k$ and
therefore Assumption (1) is preserved. This justifies that we can
repeat this process until $\max_v d_v\leq K+3$.

\smallskip

\noindent{\bf Step 3: Constructing regular graphs of low degree.} We can now assume $3\leq d_i\leq K+3$ for all $i$. For every $3\leq k\leq K+3$, we would like to construct a connected
graph that contains all the vertices of degree $k$. Denote by $m_k$ the number of vertices of degree $k$. We
need to use the fact that for every $k$, there exists a connected
$k$-regular graph $H$ on $m$ vertices of girth at least $5$, where
$m\geq n_k$ for some $n_k\in \N$ depending only on $k$.

\noindent{\bf Case 1: $m_k\leq k n_k$.} We first group all the
vertices in blocks of size $(k+1)$ and for each block we support a
clique such that each agent holds an event for
his block with rate $1/(k+1)$. It remains to make the graph connected.
To this end, we pick two vertices $u_j, v_j$ from the $j$-th clique
for every $j$. Now we add edges of form $(u_j, v_{j+1})$ for every
$j$, where $u_j$ and $v_{j+1}$ both hold events for each other
with rate $1/2$. This graph is supportable and
connected and furthermore, the $\ell_1$ shift of the degree sequence is
at most $2 n_k$.

\noindent{\bf Case 2: $m_k\geq k n_k$.} In this case, take a connected
$k$-regular graph $H$ on $m_k/k$ vertices of girth at least $5$. We
construct our connection graph based on $H$. Basically, we replace
each node of $H$ by a clique of size $k$ where each node in the clique holds an event for the whole clique with rate $1/k$. Then, for every edge in
$H$, we take a vertex from each of the two cliques corresponding to
the nodes of this edge, and add an edge between this two vertices by
letting them invite each other with rate $1/2$. We do this in a way
such that we add exactly one edge to every vertex (using the fact that $H$ is $k$-regular). Since the girth
of $H$ is at least 5, we see that in our construction, those who are
not connected have meeting rate at most $1/k \leq 1/3$. Recalling
$\gamma<2/3$, we verify that this construction is stable.

\smallskip

\noindent{\bf Step 4: Connecting graph components.} It remains to
connect these (roughly) $k$-regular graphs constructed in Step 3. We would like to add edges between these graphs (to connect them),
but we must guarantee that no vertex is involved in 2 such bridges (including those used in Step 3). For each
 ``regular'' graph  constructed in Case 1 in the previous step, it is clear
that we can choose two vertices such that there are no bridges associated with them. For graphs constructed in Case 2, since the underlying graph $H$
contains cycles, we can break a bridge between two vertices (without affecting the
connectedness). In either
case, we get two vertices for each such graph with no bridges.
Label these vertices as $v_k, u_k$ for $3\leq k\leq K+3$ and add edges
of the form $(v_k, u_{k+1})$ by letting them invite each other at rate
$1/2$. We now obtain a connected graph and it is stable.

\smallskip

\noindent{\bf Attaching degree-1 vertices.} We now turn to dealing with degree 1 vertices. Basically, up to the availability of degree 1 vertices, we want to modify our construction slightly such we are allowed to attach degree 1 vertices. This modification happens in Steps 2 and 3.

Suppose now we have at least $K$ degree 1 vertices. In Step 2, whenever we construct a clique of size $k$, we can attach degree 1 vertices to the clique we added. More precisely, instead of taking $k$ vertices of degree $k$ from $A$, we now take $k-1$ vertices of degree $k$ from $A$ and denote them by $v_1, \ldots, v_{k-1}$.  We then form a clique for $\{v, v_1, \ldots, v_{k-1}\}$ by having each $v_i$ holds an event for the clique at rate $1/(k-1)$. We then update $d_v$ to $d_v - (k-1)$. If $k\geq 4$, we can now attach a vertex of degree 1 to each $v_i$ where both this newly added vertex and $v_i$ invite each other at rate $1/2$. Remove $\{v_1, \ldots, v_{k-1}\}$ and these attached degree 1 vertices from $A$. We see that Assumptions (1) and (3) are preserved.

In Step (3), we consider every $4\leq k \leq K+3$. If $m_k \leq k n_k$, we can attach a degree 1 vertex to every vertex in the graph that does not have a bridge (of which there are at least $m_k - 2n_k \geq \frac{m_k}{3}$) and this perturbs the degree of at most $m_k$ vertices by 1. If $m_k>kn_k$, we consider the $k$-regular connection graph we constructed. We see that we can remove up to $|H|(k-2)/2$ bridges and the graph will remain connected. Now, for each bridge that is broken, we can attach a degree 1 vertex to each end of this bridge using mutual invitation rate $1/2$. This implies that we can add as many as $(k-2)|H| = \frac{k-1}{k} m_k \geq \frac{m_k}{3}$ degree 1 vertices in this step.  This completes the consideration of degree 1 vertices.

\smallskip

We are done with the construction. It is easy to see from our 4
steps that the total $\ell_1$ shift of the degree sequence is
bounded by some $C_K>0$ depending only on $K$. In Step 1 and 2, we
removed many low degree vertices and every low degree vertex is
connected to some vertex which remains. In Step 3 and 4, we
constructed a connected graph out of all the remaining vertices.
This implies that the final connection graph is connected. Note also
that at each step of the construction, those vertices who remain
have not held any event, so there is no interplay between steps that
would affect the stability of the construction. That is to say, the
stability of each step (as demonstrated above) implies the stability
of the whole construction. Finally, the total number of people an
agent invites is bounded by $K+3$. This completes the proof.
\end{proof}

\begin{cor}
Fix $1/2<\gamma<2/3$. Let  $D = \{d_1, \ldots,
d_n\} \in [n]^n$ be a degree sequence such that
\begin{itemize}
\item $|\{i \in [n]: d_i = k\}| =  \lfloor c n k^{-\alpha} \rfloor$, for some constant $c>0$ and $\alpha>2$.
\item $|\{i\in [n]: d_i = 2\}| \leq \sum_{i\in[n]}(d_i-3) \one_{\{d_i\geq
6\}}$.
\item $|\{i\in [n]: d_i = 1\}| \leq \frac{1}{3}|\{i \in [n]: d_i \geq 4\}|$.
\end{itemize}
Then, there exist connected $K$-supportable graphs of degree
sequence $D' = \{d'_1, \ldots, d'_n\}$ such that $\|D -
D'\|_{\ell_1} = \sum_{i\in [n]} |d_i - d'_i| \leq C_\alpha$, where
$K, C_\alpha>0$ depend only on $\alpha$.
\end{cor}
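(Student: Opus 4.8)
The plan is to deduce the corollary from Theorem~\ref{thm-degree-sequence}: fix $\gamma$ in the common range $(1/2,2/3)$, verify that the three hypotheses of the theorem hold for the prescribed power-law sequence, and read off the conclusion. Hypotheses (2) and (3) of the theorem are \emph{identical} to hypotheses (2) and (3) assumed in the corollary, so there is nothing to do for them; all the work is in exhibiting a threshold $K_0=K_0(\alpha)\in\N$ for which hypothesis~(1) of the theorem holds, namely
\[
\bigl|\{i : 2\le d_i\le K_0\}\bigr| \;\ge\; \sum_{i} d_i\,\one_{\{d_i>K_0\}} .
\]

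To bound the right-hand side, write $\sum_i d_i\,\one_{\{d_i>K_0\}}=\sum_{k>K_0} k\lfloor cnk^{-\alpha}\rfloor\le cn\sum_{k>K_0}k^{1-\alpha}$, and compare this last sum with $\int_{K_0}^{\infty}x^{1-\alpha}\,dx=\frac{K_0^{2-\alpha}}{\alpha-2}$ (valid since $x\mapsto x^{1-\alpha}$ is decreasing and, crucially, $\alpha>2$ makes the integral finite). Thus the tail is at most $cn\,K_0^{2-\alpha}/(\alpha-2)$, which is $o(n)$ as $K_0\to\infty$, uniformly in $n$. For the left-hand side, keeping only the $k=2$ term gives $\bigl|\{i:2\le d_i\le K_0\}\bigr|=\sum_{k=2}^{K_0}\lfloor cnk^{-\alpha}\rfloor\ge cn\,2^{-\alpha}-K_0$. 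Hence, choosing $K_0$ a large enough constant depending only on $\alpha$ (e.g.\ so that $K_0^{\alpha-2}\ge 2^{\alpha+1}/(\alpha-2)$) and then $n\ge n_0(K_0)=n_0(\alpha)$ large enough that $K_0\le \tfrac12 cn\,2^{-\alpha}$, the displayed inequality holds and hypothesis~(1) is verified.

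For the finitely many remaining cases $n<n_0(\alpha)$ we argue by hand. There $\sum_i d_i\le n^2<n_0^2$, and since $\gamma>1/2\ge 1/n$ for every $n\ge2$, Fact~\ref{fact.supportable.clique} shows that $K_n$ is supportable; it is trivially $n_0$-supportable, connected, and has degree sequence $D'\equiv(n-1,\dots,n-1)$, so $\|D-D'\|_{\ell_1}\le \sum_i d_i + n(n-1)<2n_0^2$. Setting $K:=\max\{K_0+3,\,n_0\}$ and $C_\alpha:=\max\{C_{K_0},\,2n_0^2\}$ --- both depending only on $\alpha$ --- Theorem~\ref{thm-degree-sequence} handles $n\ge n_0$ (its $(K_0+3)$-supportable graph is in particular $K$-supportable) and the clique construction handles $n<n_0$, which together give the claim.

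The argument is essentially bookkeeping, with no serious obstacle. The one point I would treat carefully is the interplay of constants: the hypothesis that the multiplicities $\lfloor cnk^{-\alpha}\rfloor$ form a degree sequence of length $n$ pins down $c$ only up to an $O_\alpha(1)$ error, and the various roundings --- the floors, the passage from $n\ge n_0$ to all $n$, and the perturbations performed inside Theorem~\ref{thm-degree-sequence} --- must all be absorbed into the single constant $C_\alpha$. Collecting these discrepancies is routine but is the place where care is needed.
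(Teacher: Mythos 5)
Your proposal is correct and follows essentially the same route as the paper: both deduce the corollary from Theorem~\ref{thm-degree-sequence} by noting that hypotheses (2) and (3) coincide and that the $\alpha>2$ tail decay (finite mean) yields hypothesis (1) for some $K_\alpha$ depending only on $\alpha$. Your explicit tail-versus-$k=2$-term computation and the separate treatment of the finitely many small $n$ via a supportable clique are just careful spellings-out of what the paper absorbs into the constants $K$ and $C_\alpha$.
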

\begin{proof}
We only need to verify Assumption (1) in Theorem
\ref{thm-degree-sequence}.  For power law degree with $\alpha>2$,
the average degree is finite and thus Assumption (1) holds for
some $K_\alpha$ depending on $\alpha$. Applying Theorem
\ref{thm-degree-sequence} completes the proof.
\end{proof}

\begin{remark} We note that if we impose a lower bound of $d$ on the minimal degree, we can apply a similar construction and yield a relaxed condition on $\gamma$; namely, that $1/2 < \gamma < 1 - \frac{1}{d}$.
\end{remark}

\section{Future Directions}

The strategic model of affiliation networks presented herein
represents a step toward the larger goal of obtaining a
game-theoretic understanding of the structural properties of social
networks.  As such, there are many natural extensions to consider
and questions to pose.

In this work we have focused on the static properties of networks at equilibrium.  A natural next step is to determine which equilibria are likely to arise as outcomes when a network evolves over time.  As a particular example, one might consider a growth model in which new agents arrive and initiate certain events, after which point the network attempts to stabilize via best-response dynamics.  What are the properties of networks that form according to such a process?

Our model extends easily to allow heterogeneity between agents.  In particular, one could allow connection benefits and event costs to vary between individuals (or pairs of individuals).  Such a generalization could, for example, distinguish agents according to a measure of sociability.  One could also model the effects of event size by having the meeting rate of individuals depend on the sizes of the events that they co-attend.  This could be used to model the fact that agents are less likely to build a relationship at a large gala than at an intimate dinner party.

We have focused on the solution concept of Nash equilibrium, where
deviations are assumed to occur unilaterally.
 One could extend this to allow for multiple agents to deviate jointly.  For example, one might imagine a bargaining dynamics by
 which agents jointly decide to sponser social groups, lowering the barriers to link formation.  This could be viewed as an extension
 of Nash Bargaining to groups of agents. Alternatively, one might impose a cost on attending events, which adds a cooperative
 element to the game: agents need not only propose events, but also choose whether or not to accept invitations.  Such a model
 harkens to the notion of pairwise stability that has been studied extensively in the strategic network formation literature.

\appendix

\section*{Appendix A}
\begin{proof}[\emph{\textbf{Proof of Fact \ref{fact.complete}}}]
We first note that $K_n$ is supportable by a stable event
configuration in which an agent $v$ holds event $P = V$ with rate
$1$.  Next suppose $G = (V,E)$ and there exist $u,v \in V$ with
$(u,v) \not\in E$. The utility of node $u$ would increase by at
least $a-c-b$ if he were to hold an event for $v$ with rate $1$,
which is strictly positive if $b < c(\gamma-1)$.  Thus $G$ is not
supportable.
\end{proof}

\begin{proof}[\emph{\textbf{Proof of Fact \ref{fact.supportable.clique}}}]
Let $H$ be a strongly connected subgraph of size $\ell$.  If $\gamma
\leq \frac{1}{\ell}$, then by Corollary \ref{cor.delta.bound}
$M_{v,u}^v < \frac{1}{\ell}$ for all $u,v \in H$.  Thus, it must be
that $M_{v,u} \leq \sum_{w \in V}M_{u,v}^w < \ell \cdot
\frac{1}{\ell} = 1$ for all $u,v \in H$.  Thus $(u,v) \not\in H$,
and hence $K_\ell$ is not supportable.  On the other hand, if
$\gamma > \frac{1}{\ell}$ then graph $K_\ell$ can be supported by
the event configuration in which each agent invites all others to a
single event with rate $\frac{1}{\ell}$.
\end{proof}

\begin{proof}[\emph{\textbf{Proof of Fact \ref{fact.bowtie}}}]
Let $v$ denote the single node in $H_{k,1}$ that connects the two
cliques of size $k$.  If $1/k < \gamma < 1-1/k$, graph $H_{k,1}$ can
be supported by having each node invite all of his neighbours to a
single event with rate $1/k$.
 If $\gamma \geq 1-1/k$, we can instead support $H_{k,1}$ by having each node except $v$ invite all of his neighbours
 to a single event with rate $1/(k-1)$, and $v$ holds no events.  In either case, graph $H_{k,1}$ is supportable. On the
 other hand, if $\gamma \leq 1/k$, then it must be that $M_{v,u}^v < 1/k$ for all $u,v \in H_{k,1}$.  Thus, since each
 pair of nodes in $H_{k,1}$ have at most $k-2$ common neighbours, it follows from Corollary \ref{cor.delta.bound} that
 $M_{v,u} < 1$ for all $u,v \in H_{k,1}$, and thus $H_{k,1}$ is not supportable.

For the more general case of $H_{k,p}$, let $A$ and $B$ denote the
two disjoint cliques of size $k-p$, and let $C$ denote the single
clique of size $p$. Then $H_{k,p}$ can be supported by the event
structure in which each $u \in A$ holds an event for $A \cup C$ at
rate $1/(k-p)$
 and each $v \in B$ holds an event for $B \cup C$ at rate $1/(k-p)$.
 \end{proof}

\begin{proof}[\emph{\textbf{Proof of Theorem \ref{thm.h32}}}]
Let $G = H_{3,2}$.  See Figure \ref{fig:h32}(a) for an illustration
of this graph, with a labeling of the vertices.  We show that $G$ is
supportable by explicitly giving a stable event configuration.  The
invitation rates of this event configuration are illustrated in
Figure \ref{fig:h32}(b), with the convention that a label on an edge
$(v_1,v_2)$ near vertex $v_1$ is $M_{v_1,v_2}^{v_1}$ (i.e.\ the rate
at which $v_1$ invites $v_2$).  Thus, for example, we have
$M_{x,v}^x = \frac{1-\gamma}{2}$ and $M_{w,v}^w = \frac{1}{2}$.
These invitation rates are realized through the configuration that
nests events, as in the proof of Theorem \ref{thm-best-response}. We
encourage the reader to verify that this event structure is indeed
stable when $\gamma > 1/2$, and that it supports graph $H_{3,2}$.

\begin{figure}
\begin{center}
\begin{tabular}{c|c|c}
\includegraphics[width=1.7in]{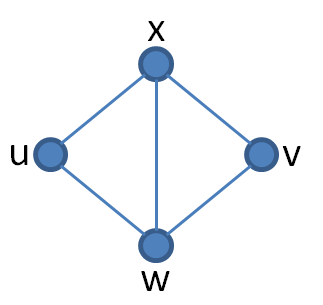} & \includegraphics[width=1.7in]{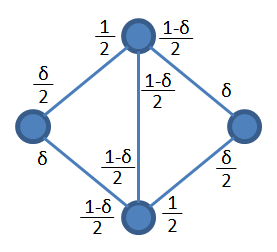} & \includegraphics[width=1.7in]{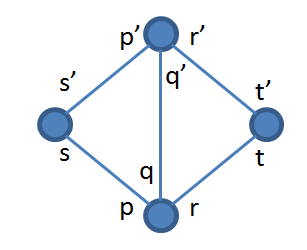}\\
(a) & (b) & (c)
\end{tabular}
\end{center}
\caption{Graph $H_{3,2}$ from the proof of Theorem \ref{thm.h32}. (a) A labeling of the vertices. (b) The invitation rates of a supporting event configuration, with the convention that a label on edge $(v_1, v_2)$ near node $v_1$ represents $M_{v_1,v_2}^{v_1}$. (c) A labeling of the invitation rates used in the proof of Theorem \ref{thm.h32}. }
\label{fig:h32}
\end{figure}

We will now show that $H_{3,2}$ is not supportable when $\gamma \leq 1/2$.  Suppose for contradiction that graph $H_{3,2}$ is supportable by a stable event configuration.  We will show that $M_{u,v} > 1-\gamma$, which contradicts Theorem \ref{thm.stable} (since node $u$ would have incentive to support a connection to node $v$).  For ease of exposition, we will assign a variable to each of the ten different invitation rates in our graph; see Figure \ref{fig:h32}(c) for this labeling.  So, for example, we will write $q'$ for $M_{x,w}^x$.

Note first that, since $\gamma \leq 1/2$, it cannot be that any invitation rate is $0$.  This is because each of the four outer edges has only one common neighbour, so if any invitation rate is $0$ then some edge is being supported by only two agents, which by Corollary \ref{cor.delta.bound} is not possible if $\gamma \leq 1/2$.  It must therefore be that each invitation rate is precisely equal to the effort required to maintain the associated connection.  In particular,
\begin{equation}
\label{eq-qq'}
M_{x,w} = \min\{s,s'\} + \min\{t,t'\} + q + q' = 1.
\end{equation}

We wish to give a lower bound on $M_{u,v}$, but this is complicated by the fact that $M_{u,v}^x$ and $M_{u,v}^w$ depend on the way that nodes $x$ and $w$ realize their invitation rates.  We therefore consider separate cases depending on the values of the invitation rates for nodes $x$ and $w$.
\textbf{Case 1: $q > p$ and $q > r$.}  Note that $M_{w,v}^x \leq q'$, and thus $q' + r + t \geq M_{w,v} = 1$.  If it were the case that $t \leq t'$, we would conclude
\[1 \leq q' + t + r < q' + t + q \leq q' + q + \min\{s,s'\} + \min\{t,t'\} = 1\,,\]
contradicting \eqref{eq-qq'}.  We must therefore have $t > t'$.  By symmetry we can also conclude that $s > s'$.  In the same way, if $q' > p'$ or $q' > r'$ we would conclude $s' > s$ or $t' > t$, a contradiction.  It must therefore be that $q' \leq p'$ and $q' \leq r'$.

By symmetry we can assume $p' \leq r'$.  We then claim that $M_{u,v} \geq p' + (p+r-q)$.  This follows from Corollary \ref{cor.argmax}: node $v$ must be included in all events held by node $x$, and hence $M_{u,v}^x=p'$; and node $x$ must be part of all events held by node $w$, so nodes $u$ and $v$ must be together in such events with rate at least $p+r-q$, by inclusion-exclusion principle.  Furthermore, since $s > s'$ and $t > t'$, \eqref{eq-qq'} implies $q+q'+s'+t' = 1$.  Finally, $q > p$ and $q > r$ implies $p + s' + p' = M_{x,u} = 1$ and $r + t' + r' = M_{x,v} = 1$.  Putting this all together, we conclude $$M_{u,v} \geq p'+(p+r-q) = 1-(r'-q') > 1-r' > 1-\gamma,$$ as required, since each invitation rate is less than $\gamma$.

\textbf{Case 2: $q' > p'$ and $q' > r'$.}  This implies $M_{u,v} > 1-\gamma$ in the same way as Case 1.

\textbf{Case 3: $q \leq \max\{p, r\}$ and $q' \leq \max\{p', r'\}$.} We note that $q+s'+p' \geq M_{x,u} = 1$ and $q'+r+t \geq M_{w,v} = 1$, which implies $q+q'+s'+t > 1$ since $p' < \gamma \leq 1/2$ and $r < \gamma \leq 1/2$.  Equation \eqref{eq-qq'} therefore implies that we cannot have $s' \leq s$ and $t \leq t'$.  Similarly, it cannot be that $s \leq s'$ and $t' \leq t$.  We can therefore assume by symmetry that $s < s'$ and $t < t'$, so that \eqref{eq-qq'} implies $q+q'+s+t = 1$.

We conclude by considering cases for $p$, $r$, $p'$, and $q'$.  Suppose that $p \leq r$ and $p' \leq r'$. Recalling that $M_{u,v} = \min\{p,r\} + \min\{p',r'\}$, we then have $M_{u, v} = p+p'$.
Since
\begin{equation}
\label{eq.m.uw}
q' + p + s \geq M_{u,w} = 1
\end{equation}
and
\begin{equation}
\label{eq.m.ux}
q + p' + s' \geq M_{u,x} = 1,
\end{equation}
we obtain
\[ 2 \leq q' + q + p + p' + s + s' = q + q' + s + s' + M_{u, v}\,. \]
Combined with $q+q'+s+t = 1$ and $s' <\gamma$, it follows that $M_{u, v} > 1-\gamma$ as required.
The cases for $p > r$ and/or $p' > r'$ are handled similarly, applying inequalities $q' + r + t \geq 1$ in place of \eqref{eq.m.uw} and $q + r' + t' \geq 1$ in place of \eqref{eq.m.ux} as appropriate.
\end{proof}

\begin{proof}[\emph{\textbf{Proof of Theorem \ref{thm.degree.lower.bound}}}]
Let $n = |V|$ be the total community size. Write $k^\star =\lfloor
1/\gamma \rfloor + 1$. Split $V$ into groups of $k^\star$ agents and
denote these groups by $A_1, \ldots, A_m$, where $m = n/k^\star$.
Now take i.i.d\ random subsets $X_i \subset V$ of size
$K+1-\lceil1/\gamma\rceil$ from the community. Define
$$Y_k = \bigcup_{i<k}\left\{v\in X_k: v\in X_i, (A_k \cup Y_k)\cap X_i \neq \emptyset \right\} \mbox{ and } Z_k = X_k \setminus Y_k\,.$$
That is, the subsets $Z_i$ are similar to the subsets $X_i$, but
``fixed'' so that no pair of agents appears in two different
subsets. Now let each node in $A_k$ invite $A_k \cup Z_k$ (except
itself) with rate $\gamma$. By our definition of $Z_k$, the meeting
rate for any two agents is either $0$ or $1$. This verifies that the
configuration is stable. We now count the total degree $D$ for the
corresponding connection graph. Recalling that the meeting rate is
either 0 or 1, we see $D = \sum_{k=1}^m
(|Z_k|+k^\star-1)(|Z_k|+k^\star)$. Note that, by the union bound,
$$\E Y_k \leq \gamma m \frac{K+1}{n} \frac{(K+1)^2}{n} = o(1)\,.$$
This implies that with high probability, we have  $|k\in[n]: Y_k
\geq 1| = o(m)$. It then follows that with high probability $$D\geq
(1-o(1))m K(K+1) = (1 -o(1)) n K(K+1) /k^\star\,.$$ This guarantees
the existence of a $K$-supportable graph with the required average
degree.
\end{proof}

\bibliography{socialnetwork}

\begin{thebibliography}{10}

\bibitem{AM88}
R.~J. Aumann and R.~B. Myerson.
\newblock Endogenous formation of links between players and of coalitions: an
  application of the shapley value.
\newblock In A.~Roth, editor, {\em In the shapley value}. Cambridge University
  Press, 1998.

\bibitem{BG00}
V.~Bala and S.~Goyal.
\newblock A noncooperative model of network formation.
\newblock {\em Econometrica}, pages 1181--1229, 2000.

\bibitem{BA99}
A.-L. Barab�si and R.~Albert.
\newblock Emergence of scaling in random networks.
\newblock {\em Science}, pages 509--512, 1999.

\bibitem{BJ07}
F.~Bloch and M.~O. Jackson.
\newblock The formation of networks with transfers among players.
\newblock {\em Journal of Economic Theory}, 133(1):83--110, March 2007.

\bibitem{Bollobas01}
B.~Bollob{\'a}s.
\newblock {\em Random graphs}, volume~73 of {\em Cambridge Studies in Advanced
  Mathematics}.
\newblock Cambridge University Press, Cambridge, second edition, 2001.

\bibitem{BR04}
B.~Bollob{\'a}s and O.~Riordan.
\newblock The diameter of a scale-free random graph.
\newblock {\em Combinatorica}, 24(1):5--34, 2004.

\bibitem{BRST01}
B.~Bollob{\'a}s, O.~Riordan, J.~Spencer, and G.~Tusn{\'a}dy.
\newblock The degree sequence of a scale-free random graph process.
\newblock {\em Random Structures Algorithms}, 18(3):279--290, 2001.

\bibitem{Boorman75}
S.~A. Boorman.
\newblock A combinatorial optimization model for transmission of job
  information through contact networks.
\newblock {\em Bell Journal of Economics}, 6(1):216--249, 1975.

\bibitem{Breiger74}
R.~L. Breiger.
\newblock The duality of persons and groups.
\newblock {\em Social Forces}, 53(2):181--190, 1974.

\bibitem{Breiger90}
R.~L. Breiger.
\newblock Social control and social networks: A model from georg simmel.
\newblock In C.~Calhoun, M.~Meyer, and W.~Scott, editors, {\em Structures of
  power and constraint: papers in honor of Peter M. Blau}, pages 453--476.
  Cambridge University Press, 1990.

\bibitem{EK10}
D.~Easley and J.~Kleinberg.
\newblock {\em Networks, Crowds, and Markets: Reasoning About a Highly
  Connected World}.
\newblock Cambridge University Press, 2010.

\bibitem{FKP02}
A.~Fabrikant, E.~Koutsoupias, and C.~H. Papadimitriou.
\newblock Heuristically optimized trade-offs: A new paradigm for power laws in
  the internet.
\newblock In {\em ICALP '02: Proceedings of the 29th International Colloquium
  on Automata, Languages and Programming}, 2002.

\bibitem{FLMPS03}
A.~Fabrikant, A.~Luthra, E.~Maneva, C.~H. Papadimitriou, and S.~Shenker.
\newblock On a network creation game.
\newblock In {\em PODC '03: Proceedings of the twenty-second annual symposium
  on Principles of distributed computing}, 2003.

\bibitem{Goyal07}
S.~Goyal.
\newblock {\em Connections: An introduction to the economics of networks}.
\newblock Princeton University Press, 2007.

\bibitem{Jackson08}
M.~O. Jackson.
\newblock {\em Social and Economic Networks}.
\newblock Princeton University Press, 2008.

\bibitem{JR05}
M.~O. Jackson and B.~W. Rogers.
\newblock The economics of small worlds.
\newblock Game theory and information, EconWPA, Mar. 2005.

\bibitem{JW95}
M.~O. Jackson and A.~Wolinsky.
\newblock A strategic model of social and economic networks.
\newblock In {\em CMSEMS Discussion Paper 1098, Northwestern University,
  revised}, 1995.

\bibitem{KRRSTU00}
R.~Kumar, P.~Raghavan, S.~Rajagopalan, D.~Sivakumar, A.~Tomkins, and E.~Upfal.
\newblock Stochastic models for the web graph.
\newblock In {\em FOCS '00: Proceedings of the 41st Annual Symposium on
  Foundations of Computer Science}, 2000.

\bibitem{LS09}
S.~Lattanzi and D.~Sivakumar.
\newblock Affiliation networks.
\newblock In {\em STOC '09: Proceedings of the 41st annual ACM symposium on
  Theory of computing}, pages 427--434, 2009.

\bibitem{McPherson82}
J.~McPherson.
\newblock Hypernetwork sampling: Duality and differentiation among voluntary
  organizations.
\newblock {\em Social Networks}, 3:225--249, 1982.

\bibitem{MR93}
M.~Molloy and B.~Reed.
\newblock A critical point for random graphs with a given degree sequence.
\newblock In {\em Proceedings of the {S}ixth {I}nternational {S}eminar on
  {R}andom {G}raphs and {P}robabilistic {M}ethods in {C}ombinatorics and
  {C}omputer {S}cience, ``{R}andom {G}raphs '93'' ({P}ozna\'n, 1993)}.

\bibitem{Myerson91}
R.~B. Myerson.
\newblock {\em Game theory: Analysis of conflict}.
\newblock Harvard Univ. Press, 1991.

\bibitem{WF89}
S.~Wasserman and K.~Faust.
\newblock Canonical analysis of the composition and structure of social
  networks.
\newblock In C.~Clogg, editor, {\em Sociological Methodology}, pages 1�42),,
  1989.

\end{thebibliography}
\bibliographystyle{abbrv}

\end{document}